\documentclass[11pt]{article}
\usepackage{xcolor}
\usepackage{graphicx}
\usepackage{subfigure}
\usepackage{snapshot}
\usepackage[margin=1in]{geometry}
\usepackage{color}
\usepackage{verbatim}
\usepackage{booktabs}
\usepackage[utf8]{inputenc}
\usepackage[numbers]{natbib}
\usepackage{enumitem}

\usepackage{bm}
\usepackage{amsfonts}
\usepackage{amssymb}
\usepackage{amsthm}
\usepackage{amsmath}
\usepackage{physics}

\usepackage{mathptmx}
\usepackage{ifpdf}
\newcommand{\mydriver}{hypertex}
\ifpdf
\renewcommand{\mydriver}{pdftex}
\fi
\usepackage[breaklinks,\mydriver]{hyperref}
\usepackage[capitalise]{cleveref}
\newtheorem{theorem}{Theorem}[section]
\newtheorem{fact}[theorem]{Fact}
\newtheorem{lemma}[theorem]{Lemma}
\newtheorem{corollary}[theorem]{Corollary}
\newtheorem{claim}[theorem]{Claim}

\newtheorem{definition}[theorem]{Definition}
\newtheorem{remark}[theorem]{Remark}

\crefname{definition}{Definition}{Definitions}
\Crefname{definition}{Definition}{Definitions}

\crefname{fact}{Fact}{Facts}
\Crefname{fact}{Fact}{Facts}

\usepackage{footnote}

\allowdisplaybreaks

\makesavenoteenv{tabular}
\makesavenoteenv{algorithm}

\usepackage{texmacs}

\usepackage{float}
\floatstyle{plain}
\newfloat{assumption}{H}{}
\floatname{assumption}{Assumption}
\crefname{assumption}{Assumption}{Assumptions}
\crefname{assumption}{Assumption}{Assumptions}

\newcommand{\eps}{\ensuremath{\varepsilon}}

\newcommand{\E}{\mathop{\mathbf{E}}}

\newcommand{\poly}{\mathrm{poly}}

\newcommand{\D}{\mathcal{D}}

\newcommand{\1}{\mathbf{1}}

\newcommand{\msqr}{\mathbin{\square}}
\newcommand{\NULL}{\bold{NULL}}
  \usepackage{nth}
  \usepackage{intcalc}

  \newcommand{\cAAAI}[1]{AAAI\ Conference\ on\ Artificial (AAAI)}

\DeclareMathAlphabet\mathbfcal{OMS}{cmsy}{b}{n}

\allowdisplaybreaks

\title{An Exponential Lower Bound for Spectral Density Estimation on Unweighted Graphs}

\author{
    Pan Peng\thanks{
    School of Computer Science and Technology, University of Science and Technology of China. \href{ppeng@ustc.edu.cn}{ppeng@ustc.edu.cn}}
    \and
	Yuyang Wang\thanks{
    School of Computer Science and Technology, University of Science and Technology of China.  \href{wangyvyang@mail.ustc.edu.cn}{wangyvyang@mail.ustc.edu.cn}}
    \and
	Joy Qiping Yang\thanks{
    The University of Sydney.  \href{yangqp5@protonmail.com}{yangqp5@protonmail.com}}
    \and
	Yichun Yang\thanks{
    Beijing Institute of Technology.  \href{yc.yang@bit.edu.cn}{yc.yang@bit.edu.cn}}
}

\date{}

\begin{document}

\maketitle

\begin{abstract}%
We study lower bounds for estimating the spectral density of the normalized adjacency matrix of a graph. Previously, Cohen-Steiner et al. [KDD 2018] proposed an algorithm for $\varepsilon$-approximate spectral density estimation in the Wasserstein-1 distance, using $2^{O(1/\varepsilon)}$ random walks initiated from uniformly random nodes in the graph. Later, Jin et al. [COLT 2023] established a nearly matching exponential lower bound for \emph{weighted} graphs, assuming the algorithm has access to samples from random walks started at random nodes. It was left open whether this lower bound could be extended to \emph{unweighted} graphs.

In this paper, we answer this question in the affirmative by proving an exponential lower bound for unweighted graphs. Specifically, we show that no algorithm can compute an $\varepsilon$-approximation to the spectrum of a normalized graph adjacency matrix with constant success probability, even when given the full transcripts of $2^{\Omega(1/\varepsilon^{1/6})}$ random walks, each of length $2^{\Omega(1/\varepsilon^{1/6})}$, started from uniformly random nodes.
\end{abstract}

\section{Introduction}
We study the problem of estimating the spectral density of the normalized adjacency matrix of an undirected graph. Let $G=(V,E,w)$ be an undirected graph on $n$ vertices with positive edge weights $w \in \mathbb{R}_{>0}^E$. Let $A_G \in \mathbb{R}^{V \times V}$ denote its adjacency matrix and let $D_G$ be the diagonal degree matrix. The normalized adjacency matrix of $G$ is defined as $N_G = D_G^{-1/2} A_G D_G^{-1/2} \in \mathbb{R}^{V \times V}$. %
Let $\lambda_1 \ge \cdots \ge \lambda_n$ be the eigenvalues of $N_G$. The \emph{spectral density} of $N_G$ (or simply of $G$), denoted by $\rho_{N_G}$, is the probability measure
\[
\rho_{N_G}(x) = \frac{1}{n} \sum_{i=1}^n \delta(x - \lambda_i),
\]
where $\delta$ denotes the Dirac delta function. The \emph{spectral density estimation (SDE)} problem asks for a probability distribution $q$ that approximates $\rho_{N_G}$ under a suitable metric. In this work, we focus on approximating $\rho_{N_G}$ to accuracy $\eps$ in the \emph{Wasserstein-1 distance} (also known as Earth Mover distance; see Definition \ref{def:wass1}). Equivalently (see e.g. \cite{Cohen-SteinerKongSohler:2018}), 
the goal is to compute (succinct representations of) estimates $\tilde{\lambda}_1 \ge \cdots \ge \tilde{\lambda}_n$ such that
\[
\frac{1}{n} \sum_{i=1}^n \lvert \tilde{\lambda}_i - \lambda_i \rvert \le \varepsilon.
\]

We refer to (the succinct representation of) such a vector ${\tilde{\lambda}_i}$ as an $\varepsilon$-approximation, in Wasserstein-1 distance (abbreviated as $W_1$-distance below), to the spectral distribution on $[-1,1]$ induced by $N_G$. More broadly, spectral density estimation is defined for any symmetric matrix and has found applications across a wide range of domains, including network science~\citep{farkas2001spectra,EikmeierGleich:2017,Cohen-SteinerKongSohler:2018,DongBensonBindel:2019,ChenTrogdonUbaru:2021,BravermanKrishnanMusco:2022}, numerical linear algebra \citep{lin2016approximating,li2019eigenvalues} and deep learning \citep{MahoneyMartin:2019}.  %

\paragraph{Sublinear-time algorithms for SDE} 
Several works have developed \emph{sublinear-time algorithms} for SDE, i.e., algorithms whose runtime is sublinear in the size of the graph's representation.

{In the \emph{random neighbor} model (also called the \emph{adaptive random walk model} in \citep{Jin23}), the algorithm is given access to two operations: (i)~sample a uniformly random vertex of $G$, and (ii)~given any specified vertex $v$, query a uniformly random neighbor of $v$. Using these operations,} Cohen-Steiner, Kong, Sohler and Valiant \citep{Cohen-SteinerKongSohler:2018} gave a randomized algorithm with query and time complexity $2^{O(1/\varepsilon)}$. Their algorithm relies solely on the transcripts of $2^{O(1/\varepsilon)}$ random walks of length $O(1/\varepsilon)$ starting from uniformly random vertices. It outputs a succinct representation of approximate eigenvalues with only $O(1/\varepsilon)$ distinct values, independent of $n$.

Moreover, in a stronger model that additionally allows the algorithm to access all neighbors of a vertex $v$ in 
time proportional to its degree\footnote{This operation can be efficiently supported in the classical \emph{adjacency list model} -- where one can query vertex degrees and the 
$i$-th neighbor. It can also be simulated with high probability in the random neighbor model at an additional $\log n$ factor, provided the degree of the queried vertex is known (see, e.g., Footnote 8 in \citep{BravermanKrishnanMusco:2022}).}, Braverman, Krishnan, and Musco \citep{BravermanKrishnanMusco:2022} gave a randomized algorithm for SDE with runtime $O(n \varepsilon^{-7})$ that outputs a length-$n$ vector representing a discrete spectral density. This was improved by Jin, Karmarkar, Musco, Sidford, and Singh \citep{jin2024faster}, who provided a randomized algorithm with $O(n \varepsilon^{-2})$ queries and $O(n \varepsilon^{-3})$ runtime, as well as a deterministic algorithm with running time $n \cdot 2^{\tilde{O}(1/\varepsilon)}$. Their approach can also be generalized to weighted graphs, though it requires a slightly stronger query model.

Given these two classes of randomized algorithms -- one with query complexity $2^{O(1/\varepsilon)}$ and the other with $O(n \cdot \mathrm{poly}(1/\varepsilon))$ -- it is natural to ask whether the exponential dependence on $\varepsilon$ in the sublinear-time algorithm of~\cite{Cohen-SteinerKongSohler:2018} can be improved, possibly to polynomial, while still avoiding any dependence on the graph size~$n$.

Answering this question turns out to be quite challenging. From a lower-bound perspective, it is difficult to construct pairs of graphs whose spectral densities are far apart, yet which are locally very similar and therefore hard to distinguish using a small number of queries. In this direction, Jin, Musco, Sidford, and Singh \citep{Jin23} proved a lower bound of $\Omega(1/\varepsilon^2)$ queries in the random neighbor model. They also studied the \emph{non-adaptive random walk model}, in which an algorithm observes transcripts of a few random walks of short length starting from uniformly random vertices, and showed that no algorithm can compute an $\varepsilon$-accurate approximation of the spectral density with constant success probability even when given $2^{\Omega(1/\varepsilon)}$ walks of length $2^{\Omega(1/\varepsilon)}$. However, this lower bound applies only to weighted graphs.

\paragraph{Our contribution} In this paper, we study lower bounds on the query complexity of SDE, with a particular focus on \emph{whether the exponential lower bound of \cite{Jin23}, previously established for weighted graphs, extends to unweighted graphs}. This question is well motivated: most real-world networks are naturally unweighted, and most sublinear-time SDE algorithms are designed for this setting (e.g., \citep{Cohen-SteinerKongSohler:2018,BravermanKrishnanMusco:2022}), so weighted-only lower bounds leave open the possibility of much faster algorithms for unweighted graphs. From a theoretical perspective, 
{unweighted graphs impose strong structural constraints --- in particular, the low-weight-edge mechanism used in prior weighted lower bounds is entirely unavailable --- making it substantially harder to construct hard instances.}
Establishing lower bounds in this regime therefore clarifies the intrinsic limitations of sublinear access models and isolates the hardness of spectral estimation itself. Indeed, \cite{Jin23} explicitly posed this as an open question. %

In this work, we provide an affirmative answer to this question. Our result shows that even in this restricted and practically relevant setting, exponential query complexity barriers persist, thereby placing fundamental limits on what sublinear-time algorithms for SDE can achieve. Specifically, we show the following:

\begin{theorem}\label{thm:main}
{There exists universal constants $\varepsilon_0 > 0$ such that the following holds. For any $\varepsilon < \varepsilon_0$,}
no algorithm that is given access to the transcripts of $m$ random walks of length $T$, each initiated at a uniformly random vertex of an \emph{unweighted} graph $G$, can approximate the spectral density of $G$ to $\varepsilon$ accuracy in Wasserstein-1 distance with probability greater than $3/4$, unless
\[
m \cdot T > 2^{c/\varepsilon^{1/6}}
\]
for some universal constant $c > 0$.
\end{theorem}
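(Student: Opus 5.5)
Our plan follows the indistinguishability strategy of \cite{Jin23}, but replaces the low-weight edges that made their construction work with purely combinatorial gadgets. We will build two distributions $\mathcal{G}_1,\mathcal{G}_2$ over unweighted graphs. The spectral densities of any $G_1\sim\mathcal{G}_1$ and $G_2\sim\mathcal{G}_2$ should be $\Omega(\varepsilon)$ apart in $W_1$, yet the joint law of $m$ random-walk transcripts of length $T$ (up to vertex relabeling) should have total variation distance $o(1)$ whenever $mT\le 2^{c/\varepsilon^{1/6}}$. A standard Yao/Le Cam argument then shows that no algorithm can succeed with probability $>3/4$ on both families.

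\textbf{Moment-matched measures.} First we obtain two probability measures $\mu_1,\mu_2$ on a small set of points. They should agree on their first $k$ moments, where $k=\Theta(1/\varepsilon^{a})$ for a suitable power $a$, and have $W_1(\mu_1,\mu_2)\gtrsim 1/k^{b}$. The natural choice is Chebyshev/Gauss-quadrature-type node sets, since two quadrature rules exact to degree $k$ can differ by $\Theta(1/k)$ in $W_1$. Second, we realize each atom by an unweighted gadget. Here the weighted edges of \cite{Jin23} are replaced by long paths or cycles, and by bipartite blow-ups whose normalized spectrum places mass near the target eigenvalues. A walk's closed-return statistics through a gadget reflect the moments of the local spectral measure. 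The gadgets are then attached to a common tree-like or expander-like backbone, so that locally, up to radius $R$, both graphs look like trees with identical type statistics. The mismatch in the $k$ matched moments then controls the difference of closed-walk counts of length $\le k$. The exponent $1/6$ should come from composing three polynomial losses: gadget size versus spectral precision, moment order versus separation, and discretization into integer multiplicities.

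\textbf{Indistinguishability.} We will show that a walk of length $T$ only sees a local ball whose distribution in the two families differs with probability $2^{-\Omega(k)}$. Gadgets are hit rarely, and when they are hit, the walk's behavior inside is governed by a Markov chain whose $t$-step return laws coincide for $t\le k$ and whose deviation beyond that decays like $2^{-\Omega(k)}$ because of mixing inside the gadget. A union bound over the $mT$ steps, together with the requirement that distinct walks stay in disjoint regions (guaranteed when $n\gg (mT)^2$), yields TV $\le mT\cdot 2^{-\Omega(k)}$. Setting $k\asymp \varepsilon^{-1/6}$ up to the tuned exponents gives the bound.

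\textbf{Main obstacle.} The hardest step will be the unweighted realization. Without small weights, the walk's escape probability from a gadget is bounded below by inverse gadget size. We must therefore engineer gadgets whose normalized spectra approximate arbitrary atoms with matched moments while keeping their induced walk distributions exactly equal up to length $k$. Our first attempt will use lifts/covers of small base graphs. Covers share universal covers, so walk transcripts agree until a cycle is closed, while their spectra can be made to differ. We would then bound the loss from rounding multiplicities to integers by $O(1/k^{2})$ in $W_1$.
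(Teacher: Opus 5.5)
Your proposal is a plan rather than a proof. The two steps that carry all the weight are left open: you give no unweighted construction (you list candidate gadgets and yourself flag their realization as the main obstacle), and the indistinguishability mechanism you do state would fail.

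You assert that inside a gadget the return laws agree for $t\le k$ and that the deviation afterwards ``decays like $2^{-\Omega(k)}$ because of mixing inside the gadget''. For a finite gadget the opposite happens. Once the walk mixes inside it, it revisits gadget vertices at rate about one over the gadget size, and the pattern of repeated labels exposes the gadget: two isolated $\ell$-cycles versus one $2\ell$-cycle are told apart by counting distinct labels after $O(\ell^2)$ steps. If instead the walk leaves the gadget quickly, you must still show it does not later return to the same gadget at a ``wound-around'' position. In an unweighted graph such a return is possible and is precisely the difficulty; nothing in your outline controls it.

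You also cannot rely on gadgets being hit rarely. Changing the rows and columns of $N_G$ indexed by a set $S$ moves the spectral density by only $O(|S|/n)$ in $W_1$ (interlacing). A gap of order $\varepsilon$ therefore forces the gadgets to occupy an $\Omega(\varepsilon)$ fraction of the vertices, and walks of length $T\gg 1/\varepsilon$ meet them constantly.

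Matching the first $k$ moments, i.e.\ the average $t$-step return probabilities for $t\le k$, does not make the \emph{label transcripts} equidistributed either. A transcript records every coincidence among visited vertices, so an explicit coupling of transcripts is needed. On the spectral side, you never argue that attaching gadgets to a backbone preserves the $W_1$ separation of $\mu_1,\mu_2$, and the exponent $1/6$ is asserted rather than derived.

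The missing idea, which is how the paper proceeds, is to hide the long cycle behind an expander factor. Take $G_i=H\msqr C_i$ with $H$ a random $d$-regular graph, $C_1$ two $\ell$-cycles and $C_2$ one $2\ell$-cycle. Your instinct about covers points the right way, since both $G_i$ are $2$-lifts of $H\msqr C_\ell$. But ``transcripts agree until a cycle is closed'' is the wrong criterion, because the walk closes the $4$-cycles of the product all the time.

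The paper instead projects the walk to $H$, where it is a lazy expander walk. It couples the transcripts (lazy labeling plus an isomorphism between the projected-walk trees) so that they coincide unless one of three events occurs:
\begin{itemize}
\item some projected walk closes a cycle in $H$;
\item some projected walk returns to an $H$-vertex after its cycle coordinate has moved by a nonzero multiple of $\ell$;
\item some projected walk meets another walk's projection.
\end{itemize}
Using $e_u^\top P^t e_u\le 1/n+\lambda^t$ and $\ell\ge 10\log_d n$, these events have total probability at most $m^2T^3/n^{c}$.

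The product structure also makes the spectrum computable: $\rho_{A_{G_i}}=\rho_{A_H}\ast\rho_{A_{C_i}}$. Eigenvalue rigidity replaces $\rho_{A_H}$ by the Kesten--McKay density up to $O(n^{-0.99})$. Concavity of $\rho_d$ for $d\ge 7$ then shows that the $1/(2\ell)$ CDF discrepancy on $[-2\sqrt{d-1}-2,\,-2\sqrt{d-1}-2\cos(\pi/\ell)]$ survives the convolution. This gives $W_1(\rho_{N_{G_1}},\rho_{N_{G_2}})=\Omega(1/\ell^6)$, and the exponent $1/6$ is exactly this $\ell^{-6}$ combined with $n=2^{\Theta(\ell)}$.
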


Consequently, even for unweighted graphs, no algorithm can compute an $\varepsilon$-accurate spectral density estimate with constant probability, even when given the transcripts of
$2^{\Omega(1/\varepsilon^{1/6})}$ 
random walks of length $
2^{\Omega(1/\varepsilon^{1/6})}$ 
started from uniformly random nodes. 

We remark that our lower bound applies only to the non-adaptive random walk model. It remains an open question whether an exponential lower bound can be established for the random neighbor model (also called the adaptive random walk model in \citep{Jin23}), %
in which the algorithm is allowed to start a random walk either from a uniformly random node or from any node of its choice. At present, the best known lower bound in this adaptive model %
is the aforementioned  $\Omega(1/\varepsilon^2)$.

\subsection{Technical overview}

We briefly review the lower bound for weighted graphs established by \cite{Jin23}. Their hard instances are constructed by combining cycles of different lengths. Specifically, they consider two graph families: $R_1$, consisting of $2n$ disjoint cycles of length $\ell := \Theta(1/\varepsilon)$, and $R_2$, consisting of $n$ disjoint cycles of length $2\ell$. These two graphs satisfy $W_1(R_1, R_2) = \Omega(1/\ell)$, yet they can be easily distinguished by short random walks. To overcome this issue, they first pad both $R_1$ and $R_2$ with additional isolated vertices, and then augment the resulting graphs with a low-weight complete graph on the same vertex set. Since the added edges have very small weights, this modification does not significantly affect the $W_1$-distance, while simultaneously making the two graphs hard to distinguish via random walks. 
{Intuitively, although each auxiliary edge has small weight, the \emph{total} auxiliary weight incident to any vertex forms a constant fraction of its weighted degree. Consequently, a random walk leaves the original cycle with constant probability at each step, and with high probability does not remain long enough to complete a full traversal of the cycle. At the same time, the spectral perturbation introduced by the auxiliary edges remains controlled, preserving the $W_1$-distance gap.
}

Extending this lower bound to unweighted graphs, as noted in \citep{Jin23}, is ``\emph{surprisingly tricky}''. For instance, a natural approach might be to replace the weighted complete graph with an unweighted expander; however, doing so would significantly alter the spectra of the graphs, making the analysis considerably more challenging.

\paragraph{Our approach.}
To extend the exponential lower bound to simple unweighted graphs and to bypass the difficulties above, we avoid directly augmenting disjoint cycles with unweighted expanders. Instead, we consider \emph{Cartesian products} of unweighted expander graphs and cycle graphs. Specifically, we start with a random $d$-regular graph $H$ (an expander) and construct two product graphs
\[
G_1 = H \msqr C_1 \qquad \text{and} \qquad G_2 = H \msqr C_2,
\]
where $C_1$ consists of two disjoint cycles of length~$\ell$, and $C_2$ is a single cycle of length~$2\ell$. Here $\ell=\poly(1/\varepsilon)$. As before, the $W_1$-distance between the spectral densities of $C_1$ and $C_2$ is large, on the order of $\Omega(1/\ell)$. We next show the following:
\begin{enumerate}
    \item\label{item:techoverone} The graphs $G_1$ and $G_2$ are $\varepsilon$-far in $W_1$-distance.
    \item\label{item:techovertwo} Given access to transcripts of $m$ non-adaptive random walks of length $T$ with $m \cdot T < 2^{c/\varepsilon^{1/6}}$, no algorithm can distinguish whether the transcripts were generated from $G_1$ or from $G_2$.
\end{enumerate}

To prove~Item (\ref{item:techoverone}), we exploit the fact that the spectral density of a Cartesian product satisfies a convolution property (\Cref{fact:density_convolution}). 
Combined with recent eigenvalue concentration results for random $d$-regular graphs from \citep{huang2024optimaleigenvaluerigidityrandom} (Lemma \ref{lem:eigenvalue_concentration}), this enables us to analyze the Wasserstein distance between the spectral densities of the two product graphs after convolution:
\[
\rho_{A_{G_1}} = \rho_{A_H} \ast \rho_{A_{C_1}}, \qquad 
\rho_{A_{G_2}} = \rho_{A_H} \ast \rho_{A_{C_2}},
\]
where $\rho_{A_K}$ denotes the spectral density of the adjacency matrix $A_K$ of a graph $K$, and $\ast$ denotes the convolution product (Definition \ref{def:convolution}). 
By our construction, $C_1$ places two eigenvalues at $-2$ whereas $C_2$ places only one, with the next eigenvalue of $C_2$ at $-2\cos(\pi/\ell)$ (\Cref{fact:eigenvalue_cycle}), creating a CDF discrepancy of $\frac{1}{2\ell}$ over an interval of width $\Theta(1/\ell^2)$. Since $W_1(\rho_1,\rho_2) = \int |F_1 - F_2|\,dx$ is lower bounded by restricting the integral to any sub-interval, we focus on the interval $[-2, -2\cos(\pi/\ell)]$ where only this multiplicity difference contributes. Through a careful analysis (Lemmas~\ref{lemma:W1_convolution} and~\ref{lemma:lower_bound_rho}), we show that a $\poly(1/\ell)$ gap persists after  convolution with $\rho_d$, despite the fact that $\rho_d$ vanishes at the endpoint $-2\sqrt{d-1}$ of its support. 
Consequently, the spectral densities of our constructed graphs $G_1$ and $G_2$ also exhibit a $\poly (1/\ell)$ gap under the $W_1$ distance. Finally, since $G_1$ and $G_2$ are $(d+2)$-regular graphs by our construction, we show that the gap between the spectral densities of their normalized adjacency matrix $N_{G_1}$ and $N_{G_2}$ remains: $W_1(\rho_{N_{G_1}},\rho_{N_{G_2}})=\frac{1}{d+2}W_1(\rho_{A_{G_1}},\rho_{A_{G_2}})\geq \frac{1}{10\ell^6 d^{2}}$. See \Cref{thm:w1_lower_bound} and Corollary \ref{cor:w1_lower_bound} for details. We ensure Item (\ref{item:techoverone}) by setting $2\varepsilon=\frac{1}{20\ell^6d^2}$.%

To prove Item (\ref{item:techovertwo}), we construct a coupling $\mathcal{D}$ over the distributions of random walk transcripts (Definition~\ref{def:trans}) on $G_1$ and $G_2$ such that, with high probability, the two transcripts generated by $\mathcal{D}$ are identical. The claim then follows from the classical coupling lemma. While \cite{Jin23} also employ a coupling argument to establish indistinguishability, extending their approach to our setting presents two key challenges due to the sparsity and lack of uniform local structure in our unweighted graphs. First, the base graph $H$ does not exhibit the uniform local connectivity of a complete graph. Second, after leaving a cycle, a random walk may return and continue exploring the same cycle, rather than escaping permanently as in~\citep{Jin23}. These issues make both the construction of the coupling and the analysis of coinciding transcripts substantially more delicate.

To address these challenges, we exploit the natural projection of vertices in the Cartesian product onto the base graph (Definition~\ref{def:proj}) and introduce the notion of \emph{projected walks}. Using these notions, we derive sufficient conditions under which the coupling produces identical transcripts, which ultimately allows us to complete the proof.

\paragraph{Remark} Finally, we note that there exists an algorithm that can distinguish $G_1$ from $G_2$ using $\poly(1/\varepsilon)$ queries in the \emph{adaptive} random walk model. This follows from a structural property of the hard instances: for most vertices $u$ in $G_1$ or $G_2$, there exist two edges $e^+$ and $e^-$ incident to $u$ that each belong to exactly $d$ length-$4$ cycles. These edges are precisely those lying on the copy of $C_1$ or $C_2$ containing $u$. In the adaptive random walk model, such edges can be identified using only $\poly(d)$ queries %
(by running a breadth-first search of depth 4; since a $1-o(1)$ fraction of vertices in a random $d$-regular expander are not contained in any length-4 cycles, any such cycle will be due to $e^+$ and $e^-$); completing the corresponding cycle then requires $\poly(d,\ell) = \poly(1/\varepsilon)$ queries, which suffices to distinguish whether $u$ comes from $G_1$ or $G_2$.

\subsection{Related work}
The matrix SDE problem -- computing the full eigendecomposition -- requires at least $O(n^{\omega})$ time, where $\omega < 2.373$ is the fast matrix multiplication exponent~\cite{parlett1998symmetric,banks2023pseudospectral}. For a matrix $A$, methods for spectral density estimation that run in $o(n^{\omega})$ time were first developed for applications in condensed matter physics and quantum chemistry~\cite{Skilling:1989,SilverRoder:1994,moldovan2020pybinding}. {It is shown that for any $n \times n$ symmetric matrix $A$ with spectral density $s$, the popular Stochastic Lanczos Quadrature (SLQ) method provably computes an approximate spectral density $q$ satisfying $W_1(s,q) \leq \varepsilon \|A\|_2$ using only $O(1/\varepsilon)$ matrix-vector multiplications with $A$ when $\varepsilon=\tilde{\Omega}(1/\sqrt{n})$ \citep{ChenTrogdonUbaru:2021}. 
{The same error bound and query complexity also hold by using the Chebyshev moment matching method~\citep{BravermanKrishnanMusco:2022}. Later, Musco et al.~\citep{musco2025sharper} removed the $\varepsilon = \tilde{\Omega}(1/\sqrt{n})$ assumption; see also \citep{bhattacharjee2025improved} for further improvements.}}

Our work is motivated by these advances, by the general importance of sublinear-time graph algorithms~\citep{rubinfeld2011sublinear}, and by recent progress on sublinear-time algorithms for other spectral problems, including expander testing~\citep{GoldreichRon:2011} and spectral clustering~\citep{peng2020robust,gluch2021spectral}, etc.

\section{Preliminaries}\label{sec:pre}

Let $G=(V,E)$ be a graph with vertex set $V$ and edge set $E$. Its adjacency matrix is denoted by $A_G$, its diagonal degree matrix by $D_G$, and its normalized adjacency matrix by $N_G=D_G^{-1/2}A_GD_G^{-1/2}$. We begin by defining the two types of spectral density functions that will be frequently used in our analysis.

\begin{definition}[Empirical spectral density]
    Given a graph $G$ and its normalized adjacency matrix $N_G$ with eigenvalues $\{\lambda_i\}_{1\leq i\leq n}$, the spectral density function of $N_G$ is defined by: 
    $\rho_{N_G}(x)=\frac{1}{n}\sum_{i=1}^{n}{\delta(x-\lambda_i)}$, where $\delta(\cdot)$ denotes the Dirac delta function, which can be heuristically represented as $\int_{-\infty}^{\infty} \delta(x)dx=1$ with $\delta(0)=\infty$ and $\delta(x)=0$ for all $x\neq 0$. %
    
    Similarly, the spectral density function of adjacency matrix $A_G$ is defined by: $\rho_{A_G}(x)=\frac{1}{n}\sum_{i=1}^{n}{\delta(x-\lambda'_i)}$, where $\{\lambda'_i\}_{1\leq i\leq n}$ are the eigenvalues of $A_G$.
\end{definition}

In this paper we quantify the discrepancy between two density functions using the $W_1$ distance. 

\begin{definition}\label{def:wass1}
    Given two distributions $\rho_1,\rho_2$, define $\Psi$ to be the set of all couplings $\psi(x,y)$ between $\rho_1,\rho_2$. Then the $W_1$ distance is defined as: $W_1(\rho_1,\rho_2)=\inf_{\psi\in\Psi}{\int_{\mathbb{R}}{\int_{\mathbb{R}}{|x-y|\cdot \psi(x,y)dxdy}}}.$
\end{definition}

Compared to the original definition, we use an alternative and more convenient formula for the $W_1$ distance. %

\begin{fact}[\cite{Panaretos_2019}]
    The $W_1$ distance between two density functions $\rho_1,\rho_2:\mathbb{R}\rightarrow \mathbb{R}$ can be expressed by: $W_1(\rho_1,\rho_2)=\int_{\mathbb{R}}{|F_{1}(x)-F_2(x)|dx}$, where $F_1, F_2$ are the cumulative distribution functions (CDFs) of $\rho_1,\rho_2$ respectively.
\end{fact}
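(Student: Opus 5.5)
We need to prove the Fact: $W_1(\rho_1,\rho_2)=\int_{\mathbb{R}}|F_1(x)-F_2(x)|\,dx$. We assume both measures have finite first moment; in our application they are supported in $[-1,1]$ or $[-d-2,d+2]$. The plan proves the two inequalities separately: first a lower bound valid for every coupling, then an explicit coupling that attains it.

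\emph{Lower bound.} Fix any coupling $\psi$ of $\rho_1,\rho_2$ and let $(X,Y)\sim\psi$. For real $x,y$ we have the layer-cake identity
\[
|x-y|=\int_{\mathbb{R}}\bigl|\mathbf{1}[x\le t]-\mathbf{1}[y\le t]\bigr|\,dt .
\]
Take expectations and apply Tonelli (everything is nonnegative), then Jensen:
\[
\E|X-Y|=\int_{\mathbb{R}}\E\bigl|\mathbf{1}[X\le t]-\mathbf{1}[Y\le t]\bigr|\,dt\ \ge\ \int_{\mathbb{R}}\bigl|\Pr[X\le t]-\Pr[Y\le t]\bigr|\,dt=\int_{\mathbb{R}}|F_1(t)-F_2(t)|\,dt .
\]
Taking the infimum over $\psi\in\Psi$ gives $W_1\ge\int|F_1-F_2|$.

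\emph{Upper bound: the quantile coupling.} Let $U$ be uniform on $(0,1)$ and set $X=F_1^{-1}(U)$, $Y=F_2^{-1}(U)$, where $F^{-1}(u)=\inf\{x:F(x)\ge u\}$. The standard fact $F^{-1}(u)\le t\iff u\le F(t)$ shows that $X\sim\rho_1$ and $Y\sim\rho_2$. Applying the same identity for this pair,
\[
\E|X-Y|=\int_{\mathbb{R}}\Pr\bigl[\mathbf{1}[X\le t]\ne\mathbf{1}[Y\le t]\bigr]\,dt .
\]
The event inside is $\{U\le F_1(t)\}\,\triangle\,\{U\le F_2(t)\}$. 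These two sets are nested intervals in $U$, so their symmetric difference has probability exactly $|F_1(t)-F_2(t)|$. Hence this coupling achieves equality, and $W_1\le\int|F_1-F_2|$.

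The main point needing care is the quantile equivalence $F^{-1}(u)\le t\iff u\le F(t)$, which uses right-continuity of $F$. It matters here because our spectral measures are atomic, so $F$ has jumps. Everything else is Tonelli and the monotone (comonotone) structure of the quantile coupling, which is exactly what turns the Jensen inequality into an equality.
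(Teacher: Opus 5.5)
Your proof is correct. The paper does not prove this fact; it only cites \cite{Panaretos_2019}. The usual textbook route is different from yours. One first shows that the monotone coupling is optimal, giving the quantile representation $W_1(\rho_1,\rho_2)=\int_0^1|F_1^{-1}(u)-F_2^{-1}(u)|\,du$, and then notes that this integral and $\int_{\mathbb{R}}|F_1-F_2|$ both measure the area between the graphs of $F_1$ and $F_2$.

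You bypass the optimality theory. The layer-cake identity combined with $|\E Z|\le \E|Z|$ gives the lower bound for every coupling in one line. The comonotone coupling then makes the events $\{X\le t\}$ and $\{Y\le t\}$ nested, which turns that inequality into an equality.

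What your version buys:
\begin{itemize}
\item It is self-contained.
\item It holds for arbitrary Borel probability measures on $\mathbb{R}$. This matters because the paper applies the fact to purely atomic spectral measures ($\rho_{A_H}$, $\rho_\gamma$) and to mixed pairs, even though the statement speaks of ``density functions''.
\item Both of your inequalities hold in $[0,\infty]$ by Tonelli. The finite-first-moment assumption you impose is therefore not needed.
\end{itemize}

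One point of interpretation. Your quantile coupling is concentrated on a monotone curve in $\mathbb{R}^2$, so it is not of the form $\psi(x,y)\,dx\,dy$ as literally written in the paper's definition of $W_1$. You are reading couplings as joint probability measures. That is the only sensible reading here, since atomic marginals admit no coupling with a Lebesgue density at all.
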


We will use the following lemma, which characterizes the eigenvalues of the cycle graph.

\begin{fact}[Theorem 5.5.1 in \cite{spielman2019sagt}]\label{fact:eigenvalue_cycle}
    The eigenvalues of the adjacency matrix of length $\ell$ cycle is $2\cos(\frac{2k}{\ell}\pi)$ (with multiplicity) for $0\leq k<\ell$.%
\end{fact}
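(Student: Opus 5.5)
The plan is the standard Fourier/eigenvector argument for a circulant matrix. Label the vertices of the length-$\ell$ cycle by $\mathbb{Z}_\ell=\{0,1,\dots,\ell-1\}$. Vertex $j$ is adjacent to $j+1$ and $j-1$ (indices mod $\ell$), so $(A x)_j = x_{j+1}+x_{j-1}$. The characters $\chi_k(j)=\omega^{jk}$ with $\omega=e^{2\pi i/\ell}$, for $0\le k<\ell$, should serve as a complete eigenbasis over $\mathbb{C}$.

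\emph{Eigenvalue equation.} For each $k$ we have
\[
(A\chi_k)_j=\omega^{(j+1)k}+\omega^{(j-1)k}=(\omega^k+\omega^{-k})\,\omega^{jk}=2\cos\!\left(\tfrac{2k\pi}{\ell}\right)\chi_k(j).
\]
So $\chi_k$ is an eigenvector with eigenvalue $2\cos(2k\pi/\ell)$. This uses periodicity $\omega^\ell=1$, which is exactly what makes the wrap-around edge between $\ell-1$ and $0$ consistent. For $\ell\ge 3$ the graph is simple; for $\ell=2$ the cycle would be a double edge, and the same computation still applies to the multigraph convention.

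\emph{Completeness and multiplicity.} The vectors $\chi_0,\dots,\chi_{\ell-1}$ are pairwise orthogonal, since $\sum_j \omega^{j(k-k')}=0$ for $k\ne k'$. Hence they are linearly independent, and being $\ell$ of them they form a basis of $\mathbb{C}^\ell$. In this basis $A$ is diagonal with entries $2\cos(2k\pi/\ell)$, so the multiset of eigenvalues is exactly $\{2\cos(2k\pi/\ell):0\le k<\ell\}$, counted with multiplicity. Since $A$ is real symmetric, these are also its real eigenvalues. One can obtain real eigenvectors via $\mathrm{Re}\,\chi_k$ and $\mathrm{Im}\,\chi_k$ if desired, with the pairs $k,\ell-k$ sharing the same cosine, but this is not needed for the spectrum.

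\emph{Main obstacle.} There is little difficulty here. The only point needing care is justifying the multiplicities, namely that no eigenvalue is missed or overcounted. Orthogonality of the characters settles this directly, so no characteristic-polynomial computation is required.
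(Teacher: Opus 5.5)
Your proof is correct and complete. The character computation gives the eigenvalue equation, orthogonality of the $\ell$ characters settles the multiplicities, and you rightly flag the $\ell=2$ multigraph caveat, which does not matter here since $\ell$ is large. The paper does not prove this fact; it only cites Spielman's textbook. The standard textbook treatment is the same Fourier argument written with the real vectors $j\mapsto\cos(2\pi kj/\ell)$ and $j\mapsto\sin(2\pi kj/\ell)$, i.e.\ the real and imaginary parts of your $\chi_k$, and checked with the angle-addition formula. Your complex version makes the basis and multiplicity count immediate. The real version has to pair $k$ with $\ell-k$ and drop the zero sine vectors at $k=0$ and, for even $\ell$, at $k=\ell/2$.
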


\subsection{Cartesian product of graphs}

\begin{definition}[Cartesian product]\label{def:cartesian}
The Cartesian product $G \msqr H$ of graphs $G$ and $H$ is the graph defined as follows:
\begin{itemize}
    \item the vertex set of $G \msqr H$ is $V(G) \times V(H) = \{(u,v) \mid u \in V(G),\ v \in V(H)\}$;
    \item two vertices $(u,v)$ and $(u',v')$ are adjacent in $G \msqr H$ if and only if either
    \begin{itemize}
        \item[(a)] $u = u'$ and $v$ is adjacent to $v'$ in $H$, or
        \item[(b)] $v = v'$ and $u$ is adjacent to $u'$ in $G$.
    \end{itemize}
\end{itemize}
\end{definition}

A well-known result states that the eigenvalues of the Cartesian product $G \msqr H$ are obtained by summing the eigenvalues of $G$ and $H$ respectively.

\begin{fact}[Lemma 6.3.2 in \cite{spielman2019sagt}]\label{fact:eigenvalue_product}
    Let $\{\lambda_i\}_{1\leq i\leq n}$ be the eigenvalues of adjacency matrix $A_G$, $\{\mu_i\}_{1\leq i\leq m}$ be the eigenvalues of adjacency matrix $A_H$. Then the adjacency matrix of the Cartesian product $G \msqr H$ has eigenvalues $\{\lambda_i+\mu_j\}_{1\leq i\leq n;1\leq j \leq m}$.
\end{fact}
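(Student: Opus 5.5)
The natural route is to diagonalize the adjacency matrix of the product with tensor products of eigenvectors of the two factors. Let $A_G$ have orthonormal eigenvectors $x_1,\dots,x_n$ with $A_G x_i=\lambda_i x_i$; such a basis exists because $A_G$ is real symmetric. Likewise, let $A_H$ have orthonormal eigenvectors $y_1,\dots,y_m$ with $A_H y_j=\mu_j y_j$. The first step is to record the matrix identity
\[
A_{G\msqr H}=A_G\otimes I_m+I_n\otimes A_H .
\]
This follows directly from Definition~\ref{def:cartesian}, with vertices indexed by pairs $(u,v)$ in lexicographic order. The entry of $A_G\otimes I_m$ at $((u,v),(u',v'))$ equals $(A_G)_{uu'}\,[v=v']$, which is exactly case (b). The entry of $I_n\otimes A_H$ equals $[u=u'](A_H)_{vv'}$, which is case (a). The two cases never overlap, since case (a) needs $v\neq v'$ and case (b) needs $u\neq u'$ (there are no loops). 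Hence the sum is exactly the $0/1$ adjacency matrix of the product.

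Next, for each pair $(i,j)$ let $z_{ij}=x_i\otimes y_j$, so that $z_{ij}(u,v)=x_i(u)y_j(v)$. The mixed-product rule $(P\otimes Q)(x\otimes y)=Px\otimes Qy$ gives
\[
A_{G\msqr H}z_{ij}=(A_Gx_i)\otimes y_j+x_i\otimes(A_Hy_j)=(\lambda_i+\mu_j)z_{ij}.
\]
The vectors $z_{ij}$ are orthonormal, because $\langle x_i\otimes y_j,\,x_{i'}\otimes y_{j'}\rangle=\langle x_i,x_{i'}\rangle\langle y_j,y_{j'}\rangle$. There are $nm$ of them in a space of dimension $nm$, so they form an eigenbasis. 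Therefore the spectrum of $A_{G\msqr H}$ is exactly the multiset $\{\lambda_i+\mu_j\}$, with multiplicities counted correctly.

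I do not expect a real obstacle. The only delicate point is getting the multiplicities right, not just the set of eigenvalues, and this is why I use a full orthonormal eigenbasis rather than just exhibiting eigenvectors. Checking the adjacency identity entrywise against the definition is the one step that needs care.
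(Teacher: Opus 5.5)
Your proof is correct and follows the standard argument for this fact: the Kronecker-sum identity $A_{G\msqr H}=A_G\otimes I_m+I_n\otimes A_H$, followed by the orthonormal tensor eigenbasis $x_i\otimes y_j$. The paper gives no proof of its own and only cites Spielman for this result. Counting $nm$ orthonormal eigenvectors, as you do, correctly settles the multiplicities.
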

\begin{definition}\label{def:convolution}
The convolution between two functions $f$ and $g$ is defined as: $f\ast g(z)=\int_{\mathbb{R}}{f(x)g(z-x)dx}.
$
\end{definition}

We state the following fact about the spectral density $\rho_{A_{G \msqr H}}$ of the product graph $G \msqr H$: the spectral density of the product is the convolution of the spectral densities of the factors. We will defer its proof to \Cref{proof:fact:density_convolution}.

\begin{fact}[Cartesian product and convolution]\label{fact:density_convolution}
It holds that: $
\rho_{A_{G \msqr H}} = \rho_{A_G} \ast \rho_{A_H}.
$
\end{fact}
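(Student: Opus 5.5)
The statement follows directly from \Cref{fact:eigenvalue_product}, once both sides are written as finite sums of Dirac masses. Let $G$ have $n$ vertices and let $A_G$ have eigenvalues $\lambda_1,\dots,\lambda_n$. Let $H$ have $m$ vertices and let $A_H$ have eigenvalues $\mu_1,\dots,\mu_m$. Then
\[
\rho_{A_G}=\frac{1}{n}\sum_{i=1}^n \delta(\cdot-\lambda_i), \qquad \rho_{A_H}=\frac{1}{m}\sum_{j=1}^m \delta(\cdot-\mu_j).
\]
The product $G \msqr H$ has $nm$ vertices. By \Cref{fact:eigenvalue_product}, its adjacency eigenvalues, counted with multiplicity, are exactly the $nm$ numbers $\lambda_i+\mu_j$. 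Hence
\[
\rho_{A_{G\msqr H}}=\frac{1}{nm}\sum_{i,j}\delta(\cdot-\lambda_i-\mu_j).
\]

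Next I will compute the convolution of \Cref{def:convolution}. By bilinearity,
\[
\rho_{A_G}\ast\rho_{A_H}=\frac{1}{nm}\sum_{i,j}\delta(\cdot-\lambda_i)\ast\delta(\cdot-\mu_j).
\]
It remains to check that $\delta(\cdot-a)\ast\delta(\cdot-b)=\delta(\cdot-a-b)$. Using the sifting property of the Dirac delta,
\[
\int \delta(x-a)\,\delta(z-x-b)\,dx=\delta(z-a-b).
\]
Summing over $i,j$ then gives exactly $\rho_{A_{G\msqr H}}$, which proves the statement.

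The one delicate point is that convolving Dirac deltas is only heuristic under the paper's informal definition of $\delta$. To make it rigorous, I will interpret the densities as probability measures and the convolution as convolution of measures. Then $\rho_{A_G}\ast\rho_{A_H}$ is the law of $X+Y$, where $X\sim\rho_{A_G}$ and $Y\sim\rho_{A_H}$ are independent. Concretely, $X=\lambda_I$ and $Y=\mu_J$ with $I$ and $J$ independent and uniform on $\{1,\dots,n\}$ and $\{1,\dots,m\}$. Then $X+Y=\lambda_I+\mu_J$ takes each value $\lambda_i+\mu_j$ with mass $1/(nm)$, counted with multiplicity.

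This is the same measure as $\rho_{A_{G\msqr H}}$. Equivalently, the two sides have the same CDF, which is the form actually used in the $W_1$ computations later. No real obstacle is expected beyond this justification.
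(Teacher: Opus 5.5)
Your proposal is correct and follows essentially the same route as the paper. The paper expands $\rho_{A_G}\ast\rho_{A_H}$ as a double sum of Dirac masses and applies the sifting property to obtain $\frac{1}{nm}\sum_{i,j}\delta(z-\lambda_i-\mu_j)$, which \Cref{fact:eigenvalue_product} identifies with $\rho_{A_{G\msqr H}}$; your extra reading of the convolution as the law of $\lambda_I+\mu_J$ with independent uniform indices makes rigorous the step the paper leaves heuristic.
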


\subsection{Random $d$-regular graphs and their spectrum}
Our hard instances are based on the random $d$-regular graphs defined as follows. 
\begin{definition}
A random $d$-regular graph on $n$ vertices is chosen uniformly from the set of all $n$-vertex $d$-regular graphs.
\end{definition}

According to classical results, the expected spectral density of the adjacency matrix for a random $d$-regular graph with sufficiently large $n$ vertices is the Kesten-McKay distribution \citep{MCKAY1981203}: 
\[\rho_d (x)  = \begin{cases}
\frac{d\sqrt{4(d-1)-x^2}}{2\pi (d^2-x^2)} &  \text{if } x\in[-2\sqrt{d-1},2\sqrt{d-1}],\\
0 & \text{otherwise}
\end{cases}\]

{Since $\rho_{A_H}$ is a discrete measure, we approximate it by the continuous Kesten--McKay density $\rho_d$, to which the analytic arguments of Lemmas~\ref{lemma:W1_convolution} and~\ref{lemma:lower_bound_rho} apply.}
We now prove that the gap between density functions $\rho_{A_H}$ and $\rho_d$ is small, where $H$ is the $n$-vertex random $d$-regular graph, using concentration results of Lemma \ref{lem:eigenvalue_concentration}. We defer its proof to \Cref{proof:lem:limit_distribution}.

\begin{lemma}\label{lem:limit_distribution}
It holds that {$W_1(\rho_{A_H},\rho_d)\leq  3\sqrt{d-1}\cdot n^{-1+o(1)}\leq n^{-0.99}$ with probability $1-n^{-\alpha}$ for constant $\alpha\geq 1$.}
\end{lemma}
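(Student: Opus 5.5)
We will compare the empirical measure $\rho_{A_H}$ with $\rho_d$ through a monotone coupling of quantiles. This uses the eigenvalue rigidity result for random $d$-regular graphs (Lemma~\ref{lem:eigenvalue_concentration}, from \citep{huang2024optimaleigenvaluerigidityrandom}). We take it in the standard form: with probability $1-n^{-\alpha}$, every bulk eigenvalue $\lambda_i$ of $A_H$ with $2\le i\le n$ satisfies
\[
|\lambda_i-\gamma_i|\le n^{-2/3+o(1)}\min(i,n-i+1)^{-1/3},
\]
where $\gamma_i$ is the $i$-th classical location of $\rho_d$, i.e.\ $\int_{\gamma_i}^{2\sqrt{d-1}}\rho_d=(i-1/2)/n$. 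The trivial eigenvalue $\lambda_1=d$ is handled separately.

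\textbf{Step 1: reduce to a discretized Kesten--McKay measure.} Let $\hat\rho=\frac1n\sum_i\delta(x-\gamma_i)$. By the triangle inequality,
\[
W_1(\rho_{A_H},\rho_d)\le W_1(\rho_{A_H},\hat\rho)+W_1(\hat\rho,\rho_d).
\]
For the second term, couple the mass of $\rho_d$ lying in the $i$-th quantile interval $I_i$ (of mass $1/n$) to $\gamma_i\in I_i$. The cost is at most $\frac1n\sum_i|I_i|\le\frac1n\cdot 4\sqrt{d-1}$, since the intervals $I_i$ are disjoint and all lie in $[-2\sqrt{d-1},2\sqrt{d-1}]$.

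\textbf{Step 2: bound the first term by pairing $\lambda_i$ with $\gamma_i$.} Because this pairing is a coupling,
\[
W_1(\rho_{A_H},\hat\rho)\le\frac1n\sum_{i}|\lambda_i-\gamma_i|.
\]
The top index contributes $|d-\gamma_1|/n\le 2d/n$. For $d$ a constant this is $O(\sqrt{d-1}\,n^{-1+o(1)})$; the polynomial factor in $d$ is absorbed either into the $n^{o(1)}$ or by adjusting constants. For the bulk indices, the rigidity bound gives
\[
\frac1n\sum_{i\ge2}n^{-2/3+o(1)}\min(i,n-i+1)^{-1/3}\le n^{-5/3+o(1)}\cdot O(n^{2/3})=n^{-1+o(1)}.
\]
The rigidity statement in the cited paper is scaled to $\sqrt{d-1}$, i.e.\ phrased for $A/\sqrt{d-1}$. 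Undoing this normalization multiplies the error by $\sqrt{d-1}$.

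\textbf{Step 3: combine.} The three contributions give $3\sqrt{d-1}\,n^{-1+o(1)}$. Since $d$ is fixed, for $n$ large this is at most $n^{-0.99}$.

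\textbf{Main obstacle.} The delicate point is matching the exact form of the rigidity lemma: its scaling, its probability bound $1-n^{-\alpha}$, and whether it covers the spectral edges. If the lemma only gives a uniform bound $|\lambda_i-\gamma_i|\le n^{-2/3+o(1)}$ near the edges together with bulk bounds in the interior, the summation in Step 2 still yields $n^{-1+o(1)}$ after splitting the indices into edge and bulk regimes. Even the crude uniform estimate $n^{-1+o(1)}$ per eigenvalue in the bulk would suffice.
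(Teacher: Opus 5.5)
Your proposal is correct and follows essentially the paper's route: a triangle inequality through a point-mass measure at the Kesten--McKay classical locations, an $O(\sqrt{d-1}/n)$ discretization cost, and the rigidity sum $\frac{1}{n}\sum_i|\lambda_i-\gamma_i| \le \sqrt{d-1}\,n^{-1+o(1)}$. The differences are cosmetic. The paper keeps $\lambda_1=d$ inside its intermediate measure and bounds the discretization term via the CDF formula (a CDF gap of at most $2/n$ times the support length, plus $d/n$ for the outlier). You instead pair $\lambda_1$ with $\gamma_1$ and use an explicit transport on quantile intervals, which is slightly cleaner.
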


\section{The hard instances}\label{sec:hardinstance}

We now define our construction of the hard instances. An illustration is provided in Figure~\ref{fig:construction}.

\begin{definition}[Hard instances]\label{def:hardinstances}
Let $d$ be some constant, and $n$ and $\ell$ be integers such that $n\geq \ell$. Fix $H = (V_H, E_H)$ to be an $n$-vertex random $d$-regular graph. Let $C_1 = (V_{C_1}, E_{C_1})$ and $C_2 = (V_{C_2}, E_{C_2})$ be graphs on $2\ell$ vertices, where $C_1$ consists of two vertex-disjoint cycles of length $\ell$, while $C_2$ consists of a single cycle of length $2\ell$. We define the hard instances:
\begin{itemize}
    \item $G_1 = H \msqr C_1$, the Cartesian product of $H$ and $C_1$;
    \item $G_2 = H \msqr C_2$, the Cartesian product of $H$ and $C_2$.
\end{itemize}
\end{definition}

\begin{figure}[h!]
    \centering
    \subfigure[$G_1$]{
    \includegraphics[scale=0.4]{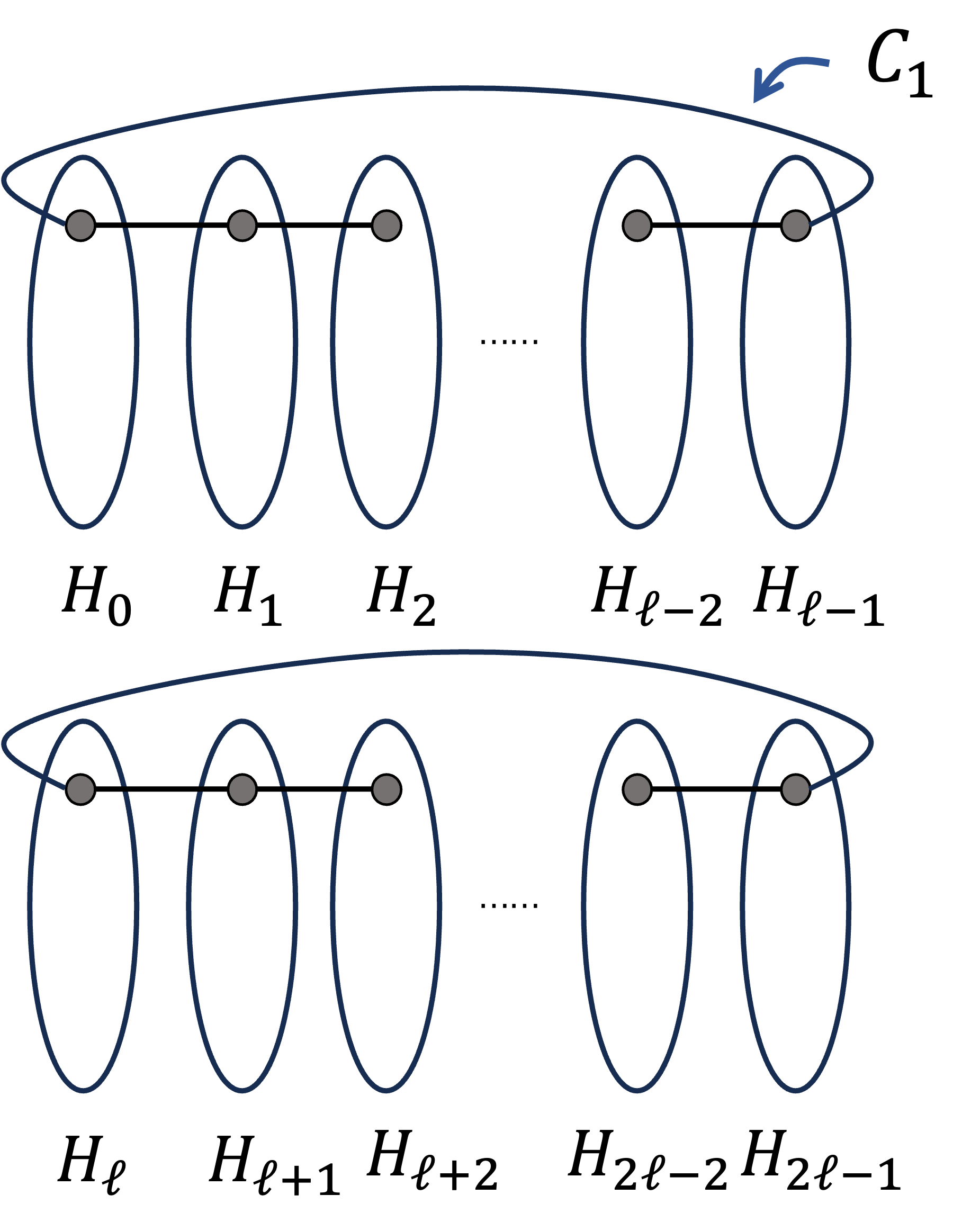} %
    }
    \subfigure[$G_2$]{
    \includegraphics[scale=0.4]{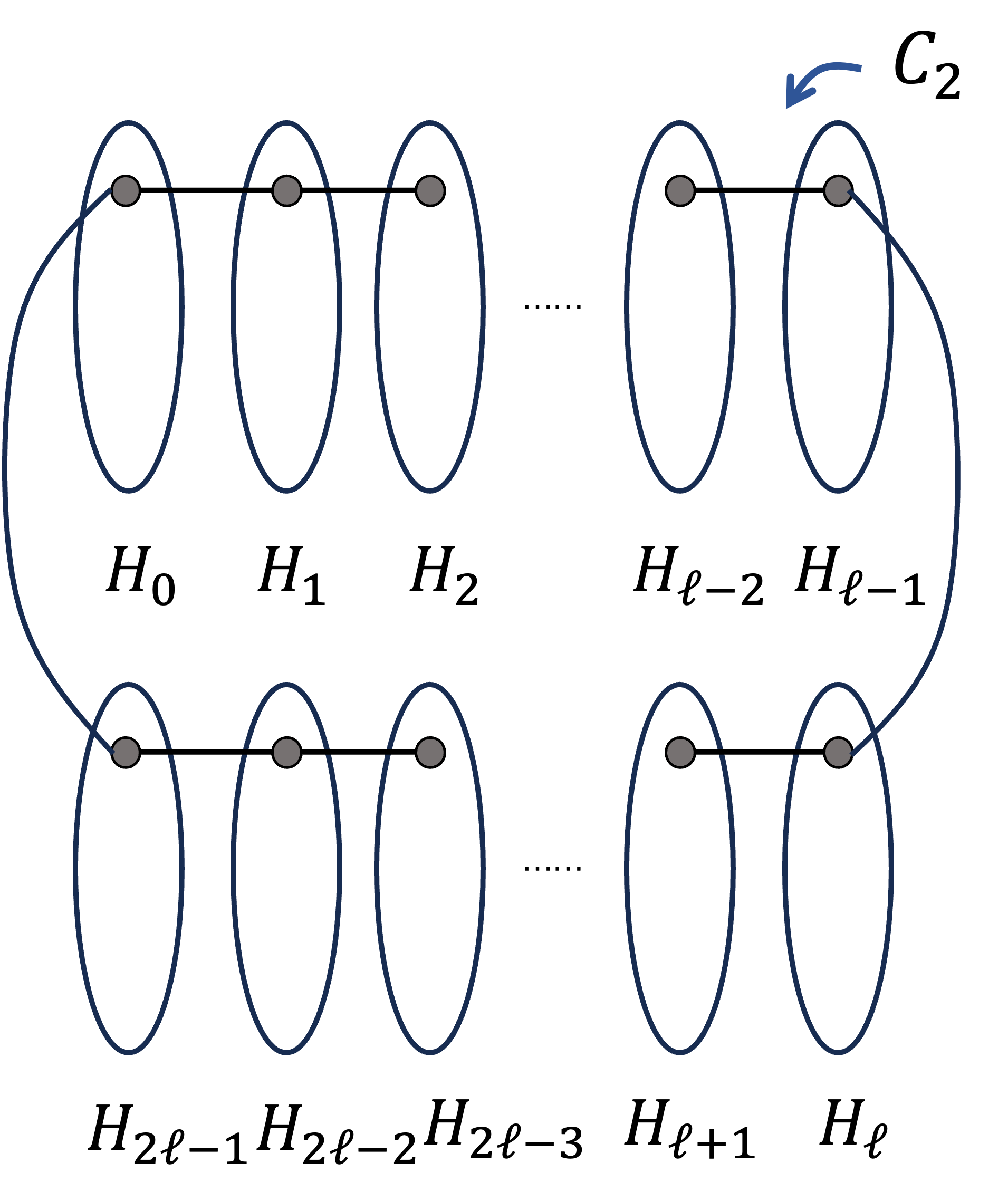} %
    }
    \caption{Illustration of our constructions $G_1$ and $G_2$.}%
    \label{fig:construction}
\end{figure}

We remark that in our final construction, we fix $d \ge 7$ as a constant, and for a sufficiently small constant $\varepsilon$, we set $\ell = \poly(1/\varepsilon)$ and $n = 2^{\Omega(\ell)}$ (see \Cref{sec:together} for details).

\section{Large $W_1$-distance between spectral densities of $G_1$ and $G_2$}\label{sec:largedistance}

In this section we prove the following theorem, showing that there is a polynomial large gap between $\rho_{A_{G_1}}$ and $\rho_{A_{G_2}}$ in terms of Wasserstein-1 distance with high probability. %

\begin{theorem}\label{thm:w1_lower_bound}
Let $G_1$ and $G_2$ be as defined in Definition \ref{def:hardinstances} %
with constant $d\geq7$ and $n\geq \ell^7d^2$. %
Then it holds that    $W_1(\rho_{A_{G_1}},\rho_{A_{G_2}})\geq \frac{\pi}{20 \ell^6 d^{3/4}} -O(n^{-0.99})$, with probability $1-n^{-\Omega(1)}$.
\end{theorem}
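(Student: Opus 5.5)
We reduce the problem to the continuous Kesten--McKay density and then lower bound a CDF gap on one short interval. By \Cref{fact:density_convolution}, $\rho_{A_{G_i}}=\rho_{A_H}\ast\rho_{A_{C_i}}$ for $i=1,2$. Convolution with a fixed probability measure is a $1$-Lipschitz operation in $W_1$: couple the two inputs optimally and add an independent common sample. Hence
\[
W_1(\rho_{A_{G_1}},\rho_{A_{G_2}})\ \ge\ W_1(\rho_d\ast\rho_{A_{C_1}},\rho_d\ast\rho_{A_{C_2}})-2W_1(\rho_{A_H},\rho_d),
\]
and by \Cref{lem:limit_distribution} the error term is $O(n^{-0.99})$ with probability $1-n^{-\Omega(1)}$. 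It therefore suffices to show deterministically that
\[
W_1(\rho_d\ast\mu_1,\rho_d\ast\mu_2)\ge \frac{\pi}{20\ell^6 d^{3/4}},
\]
where $\mu_i=\rho_{A_{C_i}}$.

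Next we describe the difference of CDFs. By \Cref{fact:eigenvalue_cycle}, $\mu_1$ puts mass $\frac1{2\ell}\cdot 2\cdot\#\{k: 2\cos(2\pi k/\ell)\le x\}$ below $x$, and $\mu_2$ puts mass $\frac1{2\ell}\#\{k:2\cos(\pi k/\ell)\le x\}$. The eigenvalues $2\cos(2\pi k/\ell)$ of the $\ell$-cycle are exactly the even-indexed eigenvalues $2\cos(\pi\cdot 2k/2\ell)$ of the $2\ell$-cycle, doubled. Let $F_{\mu_i}$ denote the CDF of $\mu_i$. Then $F_{\mu_1}-F_{\mu_2}$ is a sum of signed steps, $+\frac1{2\ell}$ at even-index eigenvalues and $-\frac1{2\ell}$ at odd-index ones. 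In particular, on $J=[-2,-2\cos(\pi/\ell))$ (assuming $\ell$ even so $-2$ is an eigenvalue of both; otherwise we handle the bottom eigenvalues analogously), we have $F_{\mu_1}-F_{\mu_2}=\frac{1}{2\ell}$, and the difference alternates in sign elsewhere. The CDF of the convolution is $\rho_d\ast F_{\mu_i}$, so
\[
W_1\ \ge\ \int_I \big|(\rho_d\ast (F_{\mu_1}-F_{\mu_2}))(z)\big|\,dz
\]
for any interval $I$. We choose $I$ near the bottom edge $z\approx -2-2\sqrt{d-1}$, namely $I=[-2-2\sqrt{d-1},\,-2-2\sqrt{d-1}+\delta]$ with $\delta=c/\ell^2$ of the order of the width of $J$.

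For $z\in I$, the function $\rho_d(z-x)$ is supported on $x\le z+2\sqrt{d-1}\le -2+\delta$. So if $\delta\le 2-2\cos(\pi/\ell)\approx \pi^2/\ell^2$, then only $x\in J$ contributes. There the integrand equals $\frac1{2\ell}\rho_d(z-x)$, which is nonnegative, so there is no cancellation. Thus
\[
(\rho_d\ast(F_{\mu_1}-F_{\mu_2}))(z)=\frac1{2\ell}\int_{-2}^{z+2\sqrt{d-1}}\rho_d(z-x)\,dx=\frac1{2\ell}\,\Phi_d\!\left(z+2+2\sqrt{d-1}\right),
\]
where $\Phi_d(t)$ is the Kesten--McKay mass of $[-2\sqrt{d-1},-2\sqrt{d-1}+t]$.

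The main obstacle is that $\rho_d$ vanishes at the edge like a square root. We handle it by a direct lower bound for $t\le \pi^2/\ell^2$ small. Writing $x=-2\sqrt{d-1}+s$, we have $4(d-1)-x^2=s(4\sqrt{d-1}-s)\ge 2\sqrt{d-1}\,s$, and the denominator satisfies $d^2-x^2\le d^2$. Hence
\[
\rho_d(x)\ge \frac{\sqrt{2}\,(d-1)^{1/4}\sqrt{s}}{2\pi d}\gtrsim d^{-3/4}\sqrt s,
\]
and integrating gives $\Phi_d(t)\gtrsim d^{-3/4}t^{3/2}$. Integrating once more over $t\in[0,\delta]$ gives
\[
W_1\gtrsim \frac{1}{\ell}\,d^{-3/4}\,\delta^{5/2}\approx d^{-3/4}\ell^{-6},
\]
using $\delta=\Theta(\ell^{-2})$. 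We then track constants: use $2-2\cos(\pi/\ell)\ge \pi^2/(2\ell^2)$ for $\ell\ge2$, and keep the $\pi$ factors, to reach $\frac{\pi}{20\ell^6d^{3/4}}$. This is where the careful bookkeeping lies. The conditions $d\ge7$ and $n\ge \ell^7d^2$ only ensure that the lemma's error is dominated and that the sign pattern of $F_{\mu_1}-F_{\mu_2}$ at the edge is as described.
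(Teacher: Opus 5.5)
Your proof is correct and follows the paper's overall structure. The first step is the same reduction: convolution is $1$-Lipschitz in $W_1$, which combined with \Cref{lem:limit_distribution} gives the paper's \Cref{lem:rhoagc1c2}. You then restrict to the same window $[-2-2\sqrt{d-1},\,-2\sqrt{d-1}-2\cos(\pi/\ell)]$, where the CDF gap is exactly $\frac{1}{2\ell}F_d(t+2)$ with no cancellation. Finally you use a square-root lower bound on $\rho_d$ at the left spectral edge.

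The one genuine difference is how you lower bound the edge mass $F_d(-2\sqrt{d-1}+u)$. The paper uses concavity of $\rho_d$ (\Cref{thm:concave_rho_d}, valid only for $d\ge 7$) and Jensen's inequality to get $F_d(-2\sqrt{d-1}+u)\ge \frac{u}{2}\,\rho_d(-2\sqrt{d-1}+u)$. Only then does it apply the pointwise bound of \Cref{lemma:lower_bound_rho} at the single point $-2\sqrt{d-1}+u$. You instead integrate the pointwise bound $\rho_d(-2\sqrt{d-1}+s)\gtrsim d^{-3/4}\sqrt{s}$ directly. This yields $u^{3/2}$ with constant $\frac23$ in place of $\frac12$.

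What your route buys:
\begin{itemize}
\item The concavity lemma becomes unnecessary, and with it the hypothesis $d\ge 7$; your argument goes through for any fixed $d\ge 3$.
\item Your edge bound, via $s(4\sqrt{d-1}-s)\ge 2\sqrt{d-1}\,s$, holds for all $s\le 2\sqrt{d-1}$. The paper's version holds only for $s<0.01\sqrt{d-1}$, which implicitly needs $\ell$ moderately large. Yours covers the whole window for every $\ell\ge 2$.
\item Tracking your constants gives about $\pi^4(d-1)^{1/4}/(60\,\ell^6 d)$, roughly ten times the claimed $\pi/(20\ell^6 d^{3/4})$.
\end{itemize}

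One correction to your closing remark: the sign pattern of $F_{\mu_1}-F_{\mu_2}$ near $-2$ depends only on the parity of $\ell$, not on $d$ or $n$. In the paper, $d\ge 7$ is used solely for the concavity step that your argument avoids.
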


\begin{remark}
It is likely that the above $W_1$-distance lower bound of $\Omega(1/\ell^6)$ is not optimal. In fact, the lower bound in Theorem~\ref{thm:w1_lower_bound} may be improvable to $\Omega(1/\ell^2)$; see the simulation results in Appendix~\ref{sec:experiments} for supporting evidence. Determining the optimal lower bound of $W_1(\rho_{A_{G_1}},\rho_{A_{G_2}})$ remains an open problem.
   
\end{remark}

\noindent We have the following corollary of the $W_1$-distance of $\rho_{N_{G_1}}$ and $\rho_{N_{G_2}}$.

\begin{corollary}\label{cor:w1_lower_bound}
Let $G_1$ and $G_2$ be as defined in Definition \ref{def:hardinstances}. Suppose that %
{$d\geq7$ and $n\geq \ell^7d^2$}. 
Then it holds that  
\(W_1 (\rho_{N_{G_1}},\rho_{N_{G_2}}) \geq \frac{1}{10\ell^6d^{2}},
\)
with probability $1-n^{-\Omega(1)}$.
\end{corollary}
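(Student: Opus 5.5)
The plan is to deduce the corollary directly from Theorem~\ref{thm:w1_lower_bound} by a rescaling argument. The first step is to observe that $G_1$ and $G_2$ are $(d+2)$-regular. In the Cartesian product $H \msqr C_i$, a vertex $(u,v)$ has $\deg_H(u)+\deg_{C_i}(v)=d+2$ neighbours, because $H$ is $d$-regular and every vertex of $C_1$ or $C_2$ has degree $2$. Here we use $\ell\geq 3$, so that the cycles are simple. Consequently $D_{G_i}=(d+2)I$ and $N_{G_i}=\frac{1}{d+2}A_{G_i}$. Each eigenvalue of $N_{G_i}$ is therefore the corresponding eigenvalue of $A_{G_i}$ divided by $d+2$, and $\rho_{N_{G_i}}$ is the pushforward of $\rho_{A_{G_i}}$ under the map $x\mapsto x/(d+2)$.

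The second step shows that $W_1$ scales linearly under dilation. Write $F^N_i(x)=F^A_i((d+2)x)$ for the CDFs. By the CDF formula for $W_1$ and the substitution $y=(d+2)x$,
\[
W_1(\rho_{N_{G_1}},\rho_{N_{G_2}})=\int_{\mathbb{R}}\bigl|F^A_1((d+2)x)-F^A_2((d+2)x)\bigr|\,dx=\frac{1}{d+2}\,W_1(\rho_{A_{G_1}},\rho_{A_{G_2}}).
\]
One can see the same identity through couplings: any coupling of the $A$-densities, pushed forward under the dilation, gives a coupling of the $N$-densities whose cost is exactly $1/(d+2)$ times as large, and the converse also holds.

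The third step is the numerical estimate. Theorem~\ref{thm:w1_lower_bound} gives, with probability $1-n^{-\Omega(1)}$,
\[
W_1(\rho_{N_{G_1}},\rho_{N_{G_2}})\ge\frac{1}{d+2}\Bigl(\frac{\pi}{20\ell^6d^{3/4}}-O(n^{-0.99})\Bigr).
\]
For the main term, $d\ge 7$ gives $d+2\le \tfrac{9}{7}d$. Hence
\[
\frac{\pi}{20\ell^6 d^{3/4}(d+2)}\ \ge\ \frac{7\pi}{180\,\ell^6 d^{7/4}}\ \ge\ \frac{0.12}{\ell^6 d^{2}}.
\]
This exceeds $\frac{1}{10\ell^6d^2}$ with slack of about $\frac{0.02}{\ell^6 d^2}$. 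For the error term, $n\ge \ell^7 d^2$ gives $n^{-0.99}\le (\ell^7d^2)^{-0.99}$, which is $o\bigl(1/(\ell^6 d^2)\bigr)$ as $\ell$ grows. So the slack absorbs the error once $\ell$ (equivalently $1/\varepsilon$) exceeds a constant, and this is the regime of interest because $\varepsilon<\varepsilon_0$.

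The only step needing care is this last one. We have to make sure the hidden constant in $O(n^{-0.99})$, which comes from Lemma~\ref{lem:limit_distribution} (there bounded by $3\sqrt{d-1}\,n^{-1+o(1)}$ times the Lipschitz factor from convolution), fits inside the slack. If the constant turns out to be awkward, I would instead strengthen the requirement to $n\ge \ell^7 d^2$ with $\ell$ large, or replace $n^{-0.99}$ by the sharper bound $3\sqrt{d-1}\,n^{-1+o(1)}$, which is clearly negligible against $1/(\ell^6 d^2)$.
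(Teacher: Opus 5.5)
Your proposal is correct and follows the same route as the paper: $(d+2)$-regularity gives $N_{G_i}=A_{G_i}/(d+2)$, so $W_1$ scales by $1/(d+2)$, and then Theorem~\ref{thm:w1_lower_bound} together with $n\ge \ell^7d^2$ absorbs the error term. Your write-up is more careful than the paper's one-line argument, and no change of hypotheses is needed: the hidden constant in $O(n^{-0.99})$ is just $2$ (the two triangle-inequality terms in Lemma~\ref{lem:rhoagc1c2}, with convolution by a probability measure being non-expansive in $W_1$), so even your conservative slack of about $0.02/(\ell^6d^2)$ absorbs it once $\ell$ exceeds a modest absolute constant.
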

\begin{proof}
Note that $G_1$ and $G_2$ are $(d+2)$-regular graphs. By the relation between the eigenvalues of normalized adjacency matrix and eigenvalues of adjacency matrix, the fact that $d$ is constant and {$n\geq \ell^7d^2$}, %
and \Cref{thm:w1_lower_bound}, we have

\hfill
\(W_1 (\rho_{N_{G_1}},\rho_{N_{G_2}}) \geqslant \frac{1}{d+2}\cdot \frac{\pi}{20\ell^6d^{3/4}} -O(n^{-0.99}) \geq \frac{1}{10\ell^6d^{2}}.
\)\hfill
\end{proof}

\subsection{Proof of \Cref{thm:w1_lower_bound}}
Now we prove \Cref{thm:w1_lower_bound}. In the following, we assume that $d\geq7$ and $n\geq \ell^7d^2$. We provide the following Lemma, where we defer its proof to \Cref{proof:lem:rhoagc1c2}.

\begin{lemma}\label{lem:rhoagc1c2}
With probability at least $1-n^{-1}$, it holds that
\[
W_1(\rho_{A_{G_1}}, \rho_{A_{G_2}})
\ge
W_1 \bigl( \rho_d \ast \rho_{A_{C_1}}, \rho_d \ast \rho_{A_{C_2}} \bigr)
- O\!\left(n^{-0.99}\right).
\]
\end{lemma}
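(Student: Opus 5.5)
By \Cref{fact:density_convolution}, $\rho_{A_{G_i}} = \rho_{A_H} \ast \rho_{A_{C_i}}$ for $i=1,2$. The plan is to compare each of these with $\rho_d \ast \rho_{A_{C_i}}$ and apply the triangle inequality for $W_1$:
\[
W_1(\rho_{A_{G_1}},\rho_{A_{G_2}}) \ge W_1(\rho_d\ast\rho_{A_{C_1}},\rho_d\ast\rho_{A_{C_2}}) - W_1(\rho_{A_H}\ast\rho_{A_{C_1}},\rho_d\ast\rho_{A_{C_1}}) - W_1(\rho_{A_H}\ast\rho_{A_{C_2}},\rho_d\ast\rho_{A_{C_2}}).
\]
It then suffices to show that each error term is at most $W_1(\rho_{A_H},\rho_d)$, which \Cref{lem:limit_distribution} bounds by $n^{-0.99}$ with probability $1-n^{-\alpha}$, $\alpha\ge1$. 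That gives the claimed probability $1-n^{-1}$.

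The key step is that convolving both arguments with a common probability measure $\mu$ does not increase $W_1$:
\[
W_1(\nu_1\ast\mu,\nu_2\ast\mu)\le W_1(\nu_1,\nu_2).
\]
I would prove this with couplings. Take a coupling $\psi$ of $(\nu_1,\nu_2)$ and an independent $Z\sim\mu$. The pair $(X+Z, Y+Z)$ with $(X,Y)\sim\psi$ is then a coupling of $(\nu_1\ast\mu,\nu_2\ast\mu)$, because the law of a sum of independent variables is the convolution of their laws. Its cost is $\E|X+Z-Y-Z| = \E|X-Y|$. Taking the infimum over $\psi$ gives the inequality.

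Alternatively, one can use the CDF formula: $F_{\nu\ast\mu}(x)=\int F_\nu(x-z)\,d\mu(z)$, so
\[
\int|F_{\nu_1\ast\mu}-F_{\nu_2\ast\mu}|\,dx \le \int\!\!\int|F_{\nu_1}(x-z)-F_{\nu_2}(x-z)|\,dx\,d\mu(z)
\]
by Tonelli. Applying the inequality with $\mu=\rho_{A_{C_i}}$, which is a probability measure, gives each error term at most $n^{-0.99}$, so their sum is $O(n^{-0.99})$.

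I expect only a mild obstacle, in rigor. The measures mix discrete and continuous parts: $\rho_{A_{C_i}}$ and $\rho_{A_H}$ are atomic, while $\rho_d$ is absolutely continuous. Convolution should therefore be read as convolution of measures, consistent with \Cref{fact:density_convolution}. Also, $W_1$ must be finite, which holds since all supports are bounded (within $[-2-d, 2+d]$). With these conventions the coupling argument is fully general, and no property of the Kesten--McKay law beyond its being a probability measure is needed.
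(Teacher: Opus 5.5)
Your proposal is correct and follows essentially the same route as the paper's proof. The paper also writes $\rho_{A_{G_i}}=\rho_{A_H}\ast\rho_{A_{C_i}}$, applies the triangle inequality, and bounds each error term by $W_1(\rho_{A_H},\rho_d)\le n^{-0.99}$ using Lemma~\ref{lem:limit_distribution}; the only difference is that the paper cites Lemma~5.2 of \cite{santambrogio2015optimal} for the contraction $W_1(\nu_1\ast\mu,\nu_2\ast\mu)\le W_1(\nu_1,\nu_2)$, whereas you prove it directly (via the coupling $(X+Z,Y+Z)$ or the CDF/Tonelli argument), which is a valid self-contained substitute.
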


\noindent In the following, we compute the lower bound for the $W_1$ distance $W_1 (\rho_d \ast \rho_{A_{C_1}}, \rho_d \ast \rho_{A_{C_2}}) $. First, we show two lemmas (Lemma \ref{lemma:W1_convolution} and Lemma \ref{lemma:lower_bound_rho}) that when combined give a lower bound on $W_1 (\rho_d \ast \rho_{A_{C_1}}, \rho_d \ast \rho_{A_{C_2}})$. We defer their proofs to \Cref{proof:lemma:W1_convolution} and \ref{proof:lemma:lower_bound_rho}, respectively.

\begin{lemma}\label{lemma:W1_convolution}
It holds that
\[ W_1 (\rho_d \ast \rho_{A_{C_1}}, \rho_d \ast \rho_{A_{C_2}})  \geq \int_{- 2 \sqrt{d - 1} - 2}^{- 2 \sqrt{d - 1} - 2 \cos \left(
  \frac{\pi}{\ell} \right)} \frac{1}{2 \ell}  \frac{(t+2)^2+4\sqrt{d-1}(t+2)+4(d-1)}{2 \left( 2 \sqrt{d - 1}
  + t + 2 \right)}\cdot  \rho_d (t + 2) d t.
  \]
\end{lemma}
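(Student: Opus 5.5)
The plan is to compute $F_1-F_2$ exactly on the window $x\in[-2\sqrt{d-1}-2,\,-2\sqrt{d-1}-2\cos(\pi/\ell)]$, restrict the integral $W_1=\int|F_1-F_2|\,dx$ to it, and then bound the Kesten--McKay CDF near its left edge by an elementary monotonicity argument. Throughout I write $a:=2\sqrt{d-1}$. The integrand simplifies, since $(t+2)^2+4\sqrt{d-1}(t+2)+4(d-1)=(t+2+a)^2$. It therefore equals $\frac{1}{2\ell}\cdot\frac{t+2+a}{2}\,\rho_d(t+2)$, and this simplified form is the target.

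\emph{Step 1: the CDF difference.} Let $F_d$ be the CDF of $\rho_d$. For a discrete measure $\mu=\sum_k w_k\delta_{\lambda_k}$, the CDF of $\rho_d\ast\mu$ is $\sum_k w_k F_d(x-\lambda_k)$. I will use \Cref{fact:eigenvalue_cycle} and take $\ell$ even, which is the regime in which $C_1$ has the double eigenvalue $-2$. Then $\rho_{A_{C_1}}$ puts mass $\frac{1}{2\ell}$ at $2\cos(2\pi j/\ell)$ for each of the two copies. The measure $\rho_{A_{C_2}}$ puts mass $\frac{1}{2\ell}$ at each $2\cos(\pi k/\ell)$, $0\le k<2\ell$. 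The even $k$ reproduce one copy of the $C_1$ eigenvalues, so
\[
\rho_{A_{C_1}}-\rho_{A_{C_2}}=\frac{1}{2\ell}\Bigl(\sum_{j=0}^{\ell-1}\delta_{2\cos(2\pi j/\ell)}-\sum_{j=0}^{\ell-1}\delta_{2\cos((2j+1)\pi/\ell)}\Bigr).
\]
Since $F_d(y)=0$ for $y\le -a$, only atoms with $\lambda< x+a\le -2\cos(\pi/\ell)$ contribute for $x$ in the window. The smallest atoms in this difference are $-2$ (positive sign), then $-2\cos(\pi/\ell)$ (negative sign), then $-2\cos(2\pi/\ell)$, and so on. Hence only the atom at $-2$ survives on the window, and $F_1(x)-F_2(x)=\frac{1}{2\ell}F_d(x+2)$ there. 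Restricting the $W_1$ integral to this interval gives
\[
W_1\ge \frac{1}{2\ell}\int_{-a-2}^{-a-2\cos(\pi/\ell)}F_d(t+2)\,dt .
\]

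\emph{Step 2: edge estimate for $F_d$.} It remains to show $F_d(s)\ge\frac{s+a}{2}\rho_d(s)$ for $s\in[-a,-a+2-2\cos(\pi/\ell)]$, which is a small neighbourhood of the left edge (in particular $s<0$). Write $\rho_d(u)=\frac{d}{2\pi}\sqrt{u+a}\cdot g(u)$ with $g(u)=\frac{\sqrt{a-u}}{d^2-u^2}$. For $u\in[-a,0]$, increasing $u$ decreases $a-u$ and increases $d^2-u^2$, so $g$ is nonincreasing there. Hence for $-a\le u\le s\le 0$ we have $\rho_d(u)\ge\rho_d(s)\sqrt{(u+a)/(s+a)}$. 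Integrating gives
\[
F_d(s)\ge \rho_d(s)\,(s+a)^{-1/2}\cdot\tfrac{2}{3}(s+a)^{3/2}=\tfrac23(s+a)\rho_d(s)\ge\tfrac12(s+a)\rho_d(s).
\]
Substituting $s=t+2$ and rewriting $\frac{t+2+a}{2}$ in the unsimplified form of the statement yields the lemma.

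\emph{Expected obstacle.} The only delicate point is the bookkeeping in Step 1. I need to make sure no other atom of $\rho_{A_{C_1}}-\rho_{A_{C_2}}$ enters the window, and I need to handle parity. If $\ell$ is odd the smallest atoms change, so I would either assume $\ell$ even, as the construction implicitly does, or redo the ordering of the extremal atoms. Step 2 is routine once the factorization $\rho_d=\sqrt{u+a}\cdot(\text{monotone})$ is noticed.
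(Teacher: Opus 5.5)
Your proposal is correct. Step 1 coincides with the paper's argument: you restrict $\int|F_1-F_2|$ to the window, note that $F_d(x-\lambda)=0$ unless $\lambda<x+a\le-2\cos(\pi/\ell)$, and conclude $|F_1-F_2|=\frac{1}{2\ell}F_d(t+2)$ there. Your signed identity for $\rho_{A_{C_1}}-\rho_{A_{C_2}}$ holds for every $\ell$. For odd $\ell$ the atom at $-2$ just carries the minus sign, which the absolute value absorbs, so the parity worry you flag is harmless; the paper disposes of that case with a one-line remark too.

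The genuine difference is Step 2. The paper proves, by an explicit second-derivative computation, that $\rho_d$ is concave on its support if and only if $d\ge 7$. It then applies Jensen's inequality between the edge $-a$ (where $\rho_d=0$) and $s=t+2$, which gives the linear minorant $\rho_d(u)\ge\frac{u+a}{s+a}\rho_d(s)$ on $[-a,s]$. Integrating yields $F_d(s)\ge\frac12(s+a)\rho_d(s)$.

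You instead split off the square-root edge factor and use only that $\sqrt{a-u}/(d^2-u^2)$ is nonincreasing on $[-a,0]$. This gives the stronger minorant $\rho_d(u)\ge\sqrt{(u+a)/(s+a)}\,\rho_d(s)$, and hence $F_d(s)\ge\frac23(s+a)\rho_d(s)$. Your route is more elementary and gives a better constant. It needs only $d\ge 3$, so that $d^2-u^2>0$ on the support, and $s\le 0$, which holds on the window since $2-2\cos(\pi/\ell)\le 2<a$. The concavity lemma is exactly where the paper's restriction $d\ge 7$ comes from, so your argument removes that restriction from this step. In exchange, concavity gives a minorant valid for every $s$ in the support rather than only its left half, but that extra generality is not used here.
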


\begin{lemma}\label{lemma:lower_bound_rho}
For any $d \ge 3$, it holds that
\[
\rho_d(x) \;\ge\; \frac{\sqrt{x + 2\sqrt{d-1}}}{2\pi d^{3/4}}
\quad \text{for all } x \in \bigl(-2\sqrt{d-1},\, -1.99\sqrt{d-1}\bigr).
\]
\end{lemma}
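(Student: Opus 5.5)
Set $s = x + 2\sqrt{d-1}$, so $s \in (0, 0.01\sqrt{d-1})$, and bound the numerator and the denominator of the Kesten--McKay formula separately. This is a direct elementary estimate on the explicit formula for $\rho_d$; nothing earlier in the paper is needed beyond the definition of $\rho_d$.

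\textbf{Numerator.} Write $4(d-1)-x^2 = (2\sqrt{d-1}-x)(2\sqrt{d-1}+x) = s\,(2\sqrt{d-1}-x)$. Since $x<-1.99\sqrt{d-1}$, we have $2\sqrt{d-1}-x \ge 3.99\sqrt{d-1}$, and also $2\sqrt{d-1}-x \le 4\sqrt{d-1}$. Hence
\[
d\sqrt{4(d-1)-x^2} \;\ge\; d\sqrt{3.99}\,(d-1)^{1/4}\sqrt{s}.
\]

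\textbf{Denominator.} Since $x^2 \le 4(d-1)$, we get $d^2 - x^2 \ge d^2 - 4d + 4 = (d-2)^2 > 0$ for $d\ge 3$. We also have the trivial upper bound $d^2 - x^2 \le d^2 - 3.96(d-1)$.

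\textbf{Combining.} We need
\[
\frac{d\sqrt{3.99}\,(d-1)^{1/4}\sqrt{s}}{2\pi\,(d^2-x^2)} \;\ge\; \frac{\sqrt{s}}{2\pi d^{3/4}},
\]
which is equivalent to
\[
\sqrt{3.99}\; d^{7/4}(d-1)^{1/4} \;\ge\; d^2 - x^2.
\]
Because $d^2-x^2 \le d^2$, it suffices to show $\sqrt{3.99}\,(d-1)^{1/4} \ge d^{1/4}$, that is, $3.99^2\,(d-1) \ge d$. This holds for every $d\ge 2$, since $15.9(d-1)\ge d$ there.

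The only point needing care is the direction of the denominator bound. We need an \emph{upper} bound on $d^2-x^2$, and the crude bound $d^2$ is enough, so there is no real obstacle. The interval restriction $x<-1.99\sqrt{d-1}$ is used only to control the factor $2\sqrt{d-1}-x$ from below.
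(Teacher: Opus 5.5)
Your proposal is correct and is essentially the paper's argument. The paper likewise bounds the denominator by $d^2-x^2\le d^2$ and the numerator by $4(d-1)-x^2=4\sqrt{d-1}\,s-s^2\ge 3.99\sqrt{d-1}\,s$, which is the same estimate as your factorization $s\,(2\sqrt{d-1}-x)$. It then concludes with $\sqrt{3.99}\,(d-1)^{1/4}\ge d^{1/4}$; you check this final inequality explicitly, whereas the paper leaves it implicit.
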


\noindent By Lemma \ref{lemma:W1_convolution} and Lemma \ref{lemma:lower_bound_rho}, we have

\begin{align*}
 W_1 (\rho_d \ast \rho_{A_{C_1}}, \rho_d \ast \rho_{A_{C_2}}) & \geq \int_{- 2 \sqrt{d - 1} - 2}^{- 2 \sqrt{d - 1} - 2 \cos \left(
  \frac{\pi}{\ell} \right)} \frac{1}{2 \ell}  \frac{(t+2)^2+4\sqrt{d-1}(t+2)+4(d-1)}{2 \left( 2 \sqrt{d - 1}
  + t + 2 \right)}\cdot  \rho_d (t + 2) d t.\\
  & =  \int_{- 2 \sqrt{d - 1}}^{- 2 \sqrt{d - 1} + 2 - 2 \cos \left(
  \frac{\pi}{\ell} \right)} \frac{1}{2 \ell}  \frac{t^2 + 4 \sqrt{d -
  1}\cdot t+ 4 (d - 1) }{2 \left( 2 \sqrt{d - 1} + t \right)} \cdot \rho_d (t) d t\\
  & \geqslant  \int_{- 2 \sqrt{d - 1}}^{- 2 \sqrt{d - 1} + 2 - 2 \cos \left(
  \frac{\pi}{\ell} \right)} \frac{1}{2 \ell}  \frac{t^2 + 4 \sqrt{d -
  1}\cdot t+4(d-1)}{2 \left( 2 \sqrt{d - 1} + t \right)} \cdot \frac{\sqrt{t + 2 \sqrt{d
- 1}}}{2\pi d^{3/4}} d t\\
  & =  \frac{1}{2 \ell}  \int_{- 2 \sqrt{d - 1}}^{- 2 \sqrt{d - 1} + 2 - 2 \cos
  \left( \frac{\pi}{\ell} \right)} \frac{\left( t + 2 \sqrt{d - 1} \right)^2 
  }{2 \left( \sqrt{t + 2
  \sqrt{d - 1}} \right)} \cdot \frac{1}{2\pi d^{3/4}} d t\\
  & =  \frac{1}{2 \ell}  \int_0^{2 - 2 \cos \left( \frac{\pi}{\ell} \right)}
  \frac{t^2}{2\sqrt{t} } \cdot \frac{1}{2\pi d^{3/4}} d t
   =  \frac{1}{8 \ell\cdot \pi d^{3/4}}  \int_0^{2 - 2 \cos \left( \frac{\pi}{\ell} \right)}
  t^{3 / 2} d t\\
  & = \frac{1}{8 \ell\cdot \pi d^{3/4}}\cdot \frac{2}{5} \left(2-2\cos \left(\frac{\pi}{\ell}\right)\right)^{5/2} \geq 
   \frac{\pi}{20 \ell^6 d^{3/4}}  .
\end{align*}

\noindent This finishes the proof of Theorem \ref{thm:w1_lower_bound}. %

\section{Indistinguishability of $G_1$ and $G_2$}\label{sec:indisting}

\subsection{Basic tools}

Recall from Definition~\ref{def:cartesian} that in $G \msqr H$, for any fixed $u \in V(G)$, the vertex set $\{(u,v)\mid v\in V(H)\}$ induces a copy (layer) of $H$, and symmetrically for $G$. For the hard instances $G_1$ and $G_2$ (Definition~\ref{def:hardinstances}), each is the Cartesian product of a $d$-regular graph $H$ (expander) and a $2$-regular graph (cycle), and hence is $(d+2)$-regular. Every vertex lies in a unique $H$-layer and a unique cycle layer, which we refer to as the expander and the cycle of the vertex, respectively. We define the projection accordingly.

\begin{definition}[Projection onto the base graph]\label{def:proj}
Let $G_1 = H \msqr C_1$. The projection onto $H$ is the mapping
$p_H^1 : V(G_1) \to V(H)$ defined by $p_H^1(u,v)=u$ for all $(u,v)\in V(H)\times V(C_1)$.

The projection $p_H^2$ is defined analogously for $G_2 = H \msqr C_2$.
\end{definition}

\begin{remark}
Slightly abusing notation, although we \emph{relabel the vertices} using a uniformly random permutation of the integers $\{1, \dots, 2n\ell\}$, the projections $p_H^1$ and $p_H^2$ are always understood with respect to the underlying Cartesian product structure, rather than the specific vertex labels.
\end{remark}

Now we define the distribution of random walk transcripts on $G_1$ and $G_2$.

\begin{definition}[Transcript distribution]\label{def:trans}
    For $m$ non-adaptive random walks, each of length $T$, a random walk transcript $S$ is a collection of $m$ individual walks, $S = \{S_1, \ldots, S_m\}$, where each $S_i$ is a sequence of $T$ node labels $v_{i,0}, \ldots, v_{i,T}$ corresponding to the nodes visited by the walk. Let $\mathcal{D}_{G_1}$ and $\mathcal{D}_{G_2}$ denote the probability distributions over transcripts generated by performing the walks on $G_1$ and $G_2$, respectively, where the nodes are labeled using a uniformly random permutation of the integers $1, \ldots, 2n\ell$.
\end{definition}

\subsection{Indistinguishability via coupling}
This section bounds the TV distance between the transcript distributions using the coupling method, indicating indistinguishability of the hard instances under the non-adaptive random walk model.

\begin{lemma}\label{lem:agreed}
Let $\ell \geq 10 \log_d n$. For $m$ non-adaptive walks of length $T$, the TV distance between $\D_{G_1}$ and $\D_{G_2}$ is bounded by $d_{TV}(\D_{G_1},\D_{G_2}) \leq \frac{m^2T^3}{n^c}$, where $c$ is a small universal constant.
\end{lemma}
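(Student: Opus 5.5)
We will bound $d_{TV}(\D_{G_1},\D_{G_2})$ by building a coupling $\D$ of the two transcript distributions and showing the coupled transcripts coincide except with probability $m^2T^3/n^c$. By the coupling lemma this bounds the TV distance. The transcripts are label sequences under a uniformly random relabeling. So it suffices to couple the \emph{unlabeled} walk histories and show that, on a good event, they have the same isomorphism type. By this we mean the same pattern of which time steps revisit previously seen vertices. Conditioned on equal patterns, the random relabeling makes the labeled transcripts identically distributed.

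\textbf{The coupling.} Identify $V(C_1)$ and $V(C_2)$ with $\mathbb{Z}_{2\ell}$: in $C_2$ the neighbors of $j$ are $j\pm1$, and in $C_1$ the indices wrap within each half. Run each walk in both graphs with shared randomness. Start both copies at the same $(h,j)$. At each step both walks choose the same move type, either one of the $d$ $H$-edges at the common projected vertex or a $\pm1$ move along the cycle. Under this coupling the projected walks $p_H^1$ and $p_H^2$ are identical (a lazy-ish random walk on $H$), and the cycle coordinates perform the same $\pm1$ steps. In $G_1$ this cycle walk lives on $\mathbb{Z}_\ell$ within one component; in $G_2$ it lives on $\mathbb{Z}_{2\ell}$.

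\textbf{Sufficient condition for equal transcripts.} The revisit patterns coincide unless one of two things happens. Either some pair of times $s<t$ (possibly in different walks) has the same projected vertex in $H$ and the two cycle coordinates agree in one graph but not the other, or two walks start in the same $H$-fiber or cycle component. A collision on the $H$-projection is needed for any coincidence, since vertices with distinct projections are distinct in both graphs. We therefore bound the probability of the following bad event: there exist (walk, time) pairs $(i,s),(i',t)$ whose projected positions coincide \emph{and} whose cycle displacements are nonzero mod $\ell$ in net sum (a "closed" projected loop carrying nontrivial cycle winding), or that touch different start components. When the net cycle displacement is $0$ in $\mathbb{Z}$, both graphs agree. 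A disagreement requires the cycle displacement to be a nonzero multiple of $\ell$. Between times $s$ and $t$ the walk must then take at least $\ell$ cycle steps while its $H$-projection returns to its start.

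\textbf{Main estimate and obstacle.} The crux is bounding the probability that the projected walk on $H$ returns to a previously visited vertex, in the same walk or another one, in a way that also makes the cycle coordinate wind by $\ell$. Two estimates are needed. First, with high probability over $H$, every ball of radius $\approx \frac{1}{4}\log_{d-1}n$ contains at most one cycle (random regular graphs are locally tree-like). A returning walk of length $k$ on a near-tree therefore has return probability decaying like $(c'/\sqrt{d})^{k}$ roughly, from non-backtracking behavior with $d\ge7$. Returns after escaping the local neighborhood have probability $n^{-c}$ per pair, by rapid mixing of the expander. Second, winding by $\ell\ge 10\log_d n$ cycle steps requires at least $\ell$ steps between the two times. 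During those steps the projection must close up, and since $\ell$ exceeds the tree-like radius, this costs $n^{-c}$. Summing over the at most $(mT)^2$ time pairs, and over the $T$ possible placements within a walk for self-intersections, gives the bound $m^2T^3/n^c$. Cross-walk start collisions add only $m^2\ell/n$. The delicate step is handling walks that leave a cycle region and later return to it. There we must argue via the projected walk that any such return either involves a short closed loop in the tree-like part with net winding $0$, or a long-range return in $H$ of probability $n^{-c}$. I expect this case analysis to be the main obstacle.
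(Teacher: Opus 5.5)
Your approach is sound, and it is simpler than the paper's.

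\textbf{How the two couplings differ.} The paper starts the two walks at \emph{independent} uniform vertices of $G_1$ and $G_2$, assigns labels lazily, and maintains an on-the-fly isomorphism $f_k$ between the two projected walks. Keeping $f_k$ consistent forces the paper to require that each projected walk is a tree (its Event~1, which needs the local tree-likeness lemma). This is on top of the ``no return with winding a multiple of $\ell$'' event (Event~2) and disjointness of different walks (Event~3).

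You instead use the fact that $G_1$ and $G_2$ share the same $H$. With a common start $(h,j)$ and common moves, the projected walks are literally equal. So within a walk the revisit patterns can differ only if $p_s=p_t$ and $x_t-x_s\equiv \ell \pmod{2\ell}$, and across walks only if two projections collide. Your observation that the labeled transcript is a uniformly random injective labeling of the revisit pattern is correct. Together these eliminate Event~1, the map $f_k$, and the lazy-labeling process entirely. The paper's isomorphism-tracking coupling does not use that the two products share $H$; it would survive replacing $H$ by two different locally tree-like graphs, and that is what it pays Event~1 for. Here that generality is not needed.

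\textbf{The main estimate.} The case analysis you anticipate (tree-like balls, non-backtracking decay, walks leaving and re-entering a region) is unnecessary. Because $|x_t-x_s|\le t-s$, a bad within-walk pair has $t-s\ge\ell$. Since $p_s$ is uniform,
\[
\Pr[p_s=p_t]=\tfrac1n\operatorname{tr}\bigl(P^{t-s}\bigr)\le \tfrac1n+\lambda^{t-s},
\]
where $P=\frac{2I+A_H}{d+2}$ is the lazy walk on $H$ and $\lambda=\max_{i\ge 2}|\theta_i(P)|<1$. This $\lambda$ is bounded away from $1$ because $H$ is near-Ramanujan with high probability. With $\ell\ge 10\log_d n$ the bound is at most $n^{-c}$, no matter where the walk wanders. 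Summing over pairs and walks gives $mT^2n^{-c}$. This is exactly the spectral bound the paper uses for its Event~2.

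\textbf{Different walks.} The bad event should simply be any projected collision $p^{i}_s=p^{i'}_t$ with $i\ne i'$. By independence and stationarity this has probability $1/n$ per pair, for a total of $m^2T^2/n$. Do not declare ``two walks start in the same cycle component'' (or the same $H$-layer) bad. That has constant (respectively $1/(2\ell)$) probability, and it is harmless once the projections are disjoint.

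Altogether you get $mT^2n^{-c}+m^2T^2/n\le m^2T^3/n^c$.
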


To prove this, we define a coupling $\D$ between $\D_{G_1}$ and $\D_{G_2}$. The coupling $\D$ is a joint distribution over a pair of random walk transcripts $S^1$ and $S^2$, such that the marginal distribution of $S^1$ equals $\D_{G_1}$, and the marginal distribution of $S^2$ equals $\D_{G_2}$. We show that $S^1$ and $S^2$ are identical with high probability, which establishes the closeness by the classical coupling lemma:
\[
d_{TV}(\D_{G_1},\D_{G_2}) \leq \Pr_\D[S^1 \neq S^2].
\]

\paragraph{Projected walk.}
To introduce the coupling, we describe the random walk on the constructed graph $G_1$. The $i$-th walk starts from $v_{i,1}$ chosen uniformly at random from $V(G_1)$. At each step, the current vertex $u$ has $d+2$ neighbors: $d$ in its expander and $2$ on its cycle. Hence with probability $\frac{d}{d+2}$ the walk performs an \emph{expander step}, and with probability $\frac{2}{d+2}$ a \emph{cycle step}.

We project the walk onto $H$ via $p_H^1$ (Definition~\ref{def:proj}), obtaining the \emph{projected walk}. Starting from $p_H^1(v_{i,1})$, each expander step $v_{i,k}\to v_{i,k+1}$ induces a step $p_H^1(v_{i,k}) \to p_H^1(v_{i,k+1})$. In contrast, a cycle step moves to a vertex $u'$ aligned with $u$ on $H$, implying $p_H^1(u)=p_H^1(u')$, so the projected walk stays. The walk on $G_2$ can be interpreted analogously. 
Thus, a random walk on $G_1$ (resp.\ $G_2$) corresponds, under projection $p_H^1$ (resp.\ $p_H^2$), to a lazy random walk on $H$ with laziness $\frac{2}{d+2}$. This insight contributes to both the design of coupling and the proof of Lemma~\ref{lem:agreed}.

\paragraph{Coupling.} 
We next define the coupling $\D$ by describing a random process that explicitly generates random walk transcripts $S^1$ and $S^2$. The process ensures that $S^1$ and $S^2$ are distributed according to $\D_{G_1}$ and $\D_{G_2}$. To this end, we employ a \emph{lazy labeling} procedure, which assigns labels to vertices upon their first visit during the random walks. We maintain dictionaries 
$L_1 : V_1 \to \{1,\ldots,2n\ell\}$ and 
$L_2 : V_2 \to \{1,\ldots,2n\ell\}$; initially $L_i(v)=\NULL$ for every $v\in V_i$ and $i\in\{1,2\}$. Once a label $j$ is assigned to $v$, i.e. $L_i(v)\gets j$, all subsequent queries return $j$. In describing the coupling, we also refer to the cycle containing a vertex $v$ and to its \emph{left} and \emph{right} neighbors on the cycle.

We now describe the process.
\begin{enumerate}[label=\textbf{\arabic*.}, wide, labelwidth=!, labelindent=0pt]

\item Choose a uniformly random permutation $\Pi$ of $\{1,\cdots,2n\ell\}$. Let $\Pi(j)$ denote the $j$-th label in the permutation. Initialize $j\gets 1$.

\item For $k = 1,\cdots,m$:
\begin{enumerate}[wide, labelwidth=!, labelindent=0pt]

\item Choose independent, uniformly random vertices $v^1_{k,0}\in G_1$ and $v^2_{k,0}\in G_2$.  
For $i \in \{1,2\}$, if $L_i(v^i_{k,0})=\NULL$, set $L_i(v^i_{k,0})\gets \Pi(j)$. Increment $j\gets j+1$.

\item During this iteration, maintain a variable $x$ tracking the \emph{depth} of the current vertex. Initialize $x=0$. Every cycle step increments (right) or decrements (left) $x$ by $1$.

\item Define $p^1_{k,i}:=p_H^1(v^1_{k,i})$ and $p^2_{k,i}:=p_H^2(v^2_{k,i})$.  
Initialize $H^1_k$ (resp.\ $H^2_k$) as a subgraph of $H$ containing only the isolated vertex $p^1_{k,0}$ (resp.\ $p^2_{k,0}$).

\noindent We use $H^1_k$ and $H^2_k$ to maintain the subgraphs induced by the projected walks (ignoring self-loops), updating them after each step. For every $i\in\mathbb{Z}$, let $H^1_{k,i}$ (resp.\ $H^2_{k,i}$) denote the subgraph induced by projected walk steps taken at depth $x=i$.

\noindent We aim to maintain that $H^1_k$ and $H^2_k$ remain isomorphic trees. Accordingly, we maintain a \emph{consistent isomorphic mapping} 
$f_k : V^1_k \to V^2_k$, 
initially mapping the two starting vertices. During this iteration, we maintain the invariant
\[
(\bigstar)\quad f_k(p^1_{k,i}) = p^2_{k,i} \quad \text{for all } i.
\]
If this invariant cannot be preserved, we say that the mapping $f_k$ \emph{fails}; otherwise $H^1_k$ and $H^2_k$ remain isomorphic.

\item For $i=1,\cdots,T$:
\begin{enumerate}[wide, labelwidth=!, labelindent=1pt]

\item If $f_k$ fails, choose $v^1_{k,i+1}$ and $v^2_{k,i+1}$ uniformly from the neighbors of $v^1_{k,i}$ and $v^2_{k,i}$, respectively. Otherwise, \textbf{with probability $\frac{2}{d+2}$} perform a cycle step and \textbf{with probability $\frac{d}{d+2}$} perform an expander step, defined as follows.

\begin{itemize}[leftmargin=*]

\item \textbf{(Cycle step)}
\begin{itemize}
    \item \textbf{With probability $\frac{1}{2}$:} set $v^1_{k,i+1}$ (resp.\ $v^2_{k,i+1}$) to be the left neighbor of $v^1_{k,i}$ (resp.\ $v^2_{k,i}$), and update $x \gets x-1$.
    \item \textbf{With probability $\frac{1}{2}$:} set $v^1_{k,i+1}$ (resp.\ $v^2_{k,i+1}$) to be the right neighbor of $v^1_{k,i}$ (resp.\ $v^2_{k,i}$), and update $x \gets x+1$.
\end{itemize}

\item \textbf{(Expander step)}  
Define
\[
R^1 := \{u\in V_1 : (v^1_{k,i},u)\in E_1 \text{ and } (p^1_{k,i},p_H^1(u))\in H^1_k\},
\]
that is, neighbors of $v^1_{k,i}$ whose projections have been visited. Define $R^2$ analogously for $v^2_{k,i}$.

Since $f_k$ has not failed, $f_k(p^1_{k,i})=p^2_{k,i}$ and $H^1_k,H^2_k$ are isomorphic. Let $r:=|R^1|=|R^2|$. For each $a\in R^1$, there is a unique $b\in R^2$ with $f_k(p_H^1(a))=p_H^2(b)$.

The next step is performed as follows:
\begin{itemize}
    \item \textbf{With probability $\frac{r}{d}$:} Choose $v^1_{k,i+1}$ uniformly at random from $R^1$. Then, deterministically set $v^2_{k,i+1}$ to be the unique vertex in $R^2$ satisfying $p_H^2(v^2_{k,i+1}) = f_k(p_H^1(v^1_{k,i+1}))$.
    
    \item \textbf{With probability $\frac{d-r}{d}$:} Choose $v^1_{k,i+1}$ and $v^2_{k,i+1}$ independently and uniformly at random from the remaining $(d-r)$ expander neighbors of $v^1_{k,i}$ and $v^2_{k,i}$, respectively.
    
    If $p_H^1(v^1_{k,i+1}) \notin V^1_k$ and $p_H^2(v^2_{k,i+1}) \notin V^2_k$, then update $f_k(p_H^1(v^1_{k,i+1})) := p_H^2(v^2_{k,i+1})$; otherwise, say $f_k$ fails.
\end{itemize}

\end{itemize}

\item If $L_1(v^1_{k,i+1}) = \NULL$, set $L_1(v^1_{k,i+1}) \gets \Pi(j)$. Likewise, if $L_2(v^2_{k,i+1}) = \NULL$, set $L_2(v^2_{k,i+1}) \gets \Pi(j)$. Update $f_k$, and increment $j \gets j+1$.

\end{enumerate}
\end{enumerate}

\item Return
\begin{align*}
S^1 &= \{\{L_1(v^1_{1,0}),\ldots,L_1(v^1_{1,T})\},\ldots,\{L_1(v^1_{m,0}),\ldots,L_1(v^1_{m,T})\}\},\\
S^2 &= \{\{L_2(v^2_{1,0}),\ldots,L_2(v^2_{1,T})\},\ldots,\{L_2(v^2_{m,0}),\ldots,L_2(v^2_{m,T})\}\}.
\end{align*}

\end{enumerate}

Since the next vertex is chosen uniformly at random from the neighborhood of current vertex at every step, the above procedure is a coupling over random walk transcripts. Next we argue that $S^1 = S^2$ with high probability. To achieve this, we define the following good events and show that $S^1 = S^2$ if these events occur simultaneously.

\medskip
\noindent\textbf{Event 1:} For all $k \in [m]$, $H^1_k$ and $H^2_k$ remain trees.

\noindent\textbf{Event 2:} For all $k \in [m]$ and depth $x \neq y$, $H^1_{k,x} \cap H^1_{k,y} = \emptyset$ and $H^2_{k,x} \cap H^2_{k,y} = \emptyset$ if $\ell \mid (x-y)$.

\noindent\textbf{Event 3:} For all $j \neq k$, $H^1_j \cap H^1_k = \emptyset$ and $H^2_j \cap H^2_k = \emptyset$.
\medskip

In the following, we assume that \textbf{Events 1, 2, 3} hold. 
To show that $S^1 = S^2$, it suffices to establish the following \emph{identical labeling} property: at each step, the labels of the destination vertices $L_1(v^1_{k,i+1})$ and $L_2(v^2_{k,i+1})$ are either both $\NULL$ or are both defined and equal.  We begin by stating a few auxiliary results.

\paragraph{The mapping $f_k$ does not fail.}
Observe that $f_k$ can fail only after an expander step (say, on $G_1$), and only if this step reaches a vertex $v^1_{k,i+1}\notin R^1$ whose projection $p^1_{k,i+1}$ has already been visited by the projected walk.
Since $v^1_{k,i+1}\notin R^1$, the projected edge $(p^1_{k,i},p^1_{k,i+1})$ has not been visited before. After taking this step, the edge $(p^1_{k,i},p^1_{k,i+1})$ is added to $H^1_k$. As $H^1_k$ is already a connected tree prior to adding this edge, this creates a cycle in $H^1_k$, contradicting \textbf{Event~1}.
\paragraph{Identical labeling.}
To show the identical labeling property, we use the following claim.

\begin{claim}\label{claim:kthrandomwalk}
For the $k$-th random walk, we have 
$v_{k,i}^1=v_{k,j}^1$ if and only if $v_{k,i}^{2}=v_{k,j}^{2}$.
\end{claim}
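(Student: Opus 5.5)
We encode each vertex visited by the $k$-th walk by a pair (projection, depth), and show that under Events 1 and 2 this pair determines the vertex in both $G_1$ and $G_2$ in the same way.

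\textbf{Step 1: invariants along the walk.} By the invariant $(\bigstar)$ (and since $f_k$ does not fail), $p^2_{k,i}=f_k(p^1_{k,i})$ for all $i$, and $f_k$ is a bijection between $V^1_k$ and $V^2_k$. Both copies share the depth counter $x_i$, because cycle steps are coupled left/right identically. We fix a consistent orientation of every cycle; cycle steps in different $H$-layers are parallel, so the cycle coordinate is well defined. Then in $G_1$ we can write $v^1_{k,i}=(p^1_{k,i},\, c^1_0+x_i \bmod \ell)$, staying within the cycle of $C_1$ containing $c^1_0$. Expander steps never change the $C_1$ coordinate, so the walk never changes which $\ell$-cycle it lies on. Similarly, $v^2_{k,i}=(p^2_{k,i},\, c^2_0+x_i \bmod 2\ell)$.

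\textbf{Step 2: reduction to projections and depths.} In $G_1$, $v^1_{k,i}=v^1_{k,j}$ iff $p^1_{k,i}=p^1_{k,j}$ and $\ell\mid x_i-x_j$. In $G_2$, $v^2_{k,i}=v^2_{k,j}$ iff $p^2_{k,i}=p^2_{k,j}$ and $2\ell\mid x_i-x_j$. Since $f_k$ is injective, $p^1_{k,i}=p^1_{k,j}$ iff $p^2_{k,i}=p^2_{k,j}$. It therefore remains to show the following: whenever the projections coincide, $\ell\mid x_i-x_j$ already forces $x_i=x_j$.

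\textbf{Step 3: use Event 2.} Suppose $p^1_{k,i}=p^1_{k,j}=:p$ and $\ell\mid x_i-x_j$ with $x_i\neq x_j$. The vertex $p$ is visited at depth $x_i$ and at depth $x_j$, so it lies in both $H^1_{k,x_i}$ and $H^1_{k,x_j}$, where a vertex counts as visited at depth $x$ when the projected walk occupies it while the counter equals $x$. This contradicts Event 2. The same argument applied to $H^2$ handles $G_2$. Hence $x_i=x_j$, which gives both $v^1_{k,i}=v^1_{k,j}$ and $v^2_{k,i}=v^2_{k,j}$. The converse directions are immediate: equal vertices in $G_2$ give $2\ell\mid x_i-x_j$, hence $\ell\mid x_i-x_j$, and then the same step applies.

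\textbf{Main obstacle.} The delicate point is the interpretation of $H^1_{k,x}$ in Event 2. It must include the vertices occupied at depth $x$, not only edges traversed there. For instance, a cycle step moves the walk from depth $x$ to depth $x\pm1$ while keeping $p$ fixed, and $p$ must then count as belonging to both layers. I would fix this convention explicitly as vertex membership.

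I would also check the boundary case $x_i-x_j=\pm\ell$ with a common projection $p$. In $G_1$ this pair of visits is the same vertex, and in $G_2$ it is two different vertices. The walk can produce it by staying at projection $p$ and taking $\ell$ consecutive cycle steps. Event 2 is exactly what excludes this case, so it must be read as a statement about vertex sets.
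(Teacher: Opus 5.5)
Your proof is correct and follows essentially the same route as the paper. It transfers coincidence of projections through the invariant $(\bigstar)$, uses Event~2 (read as vertex membership, exactly as the paper does) to rule out distinct depths differing by a multiple of $\ell$, and concludes because projection plus depth determines the vertex. The differences are cosmetic: the paper applies Event~2 in $H^2$ after mapping through $f_k$, whereas you apply it in $H^1$. You also make explicit the injectivity of $f_k$ and the consistent cycle orientation, which the paper leaves implicit in its ``symmetric'' direction.
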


The claim, together with \textbf{Event~3} (the transcripts of different random walks are disjoint), implies the identical labeling property. It therefore suffices to prove the claim.

\begin{proof}[of Claim \ref{claim:kthrandomwalk}]
We prove one direction. Suppose $v_{k,i}^1=v_{k,j}^1$; the other direction is symmetric. Let the depths of $v_{k,i}^1,v_{k,i}^2$ be $x$, and those of $v_{k,j}^1,v_{k,j}^2$ be $y$. Since $v_{k,i}^1=v_{k,j}^1$, we have $p_{k,i}^1=p_{k,j}^1$. By the invariant $(\bigstar)\; f_k(p^1_{k,i}) = p^2_{k,i}$, we have 
\[
p_{k,i}^2=f_k(p_{k,i}^1)=f_k(p_{k,j}^1)=p_{k,j}^2 .
\]

Now $p_{k,i}^2$ is visited at depth $x$, while $p_{k,j}^2$ is visited at depth $y$, implying $H^2_{k,x}$ and $H^2_{k,y}$ intersect. By \textbf{Event~2}, this can occur only when $x=y$. Hence $v_{k,i}^{2}$ and $v_{k,j}^{2}$ lie in the same $H$-copy and have the same projection, implying $v_{k,i}^{2}=v_{k,j}^{2}$.
\end{proof}

Finally, we show that all the aforementioned good events occur simultaneously w.p. $1-\frac{m^2T^3}{n^c}$ and we defer the corresponding analysis and the remaining proof of Lemma \ref{lem:agreed} to Appendix~\ref{sec:events}.

\section{Putting things together: Proof of \Cref{thm:main}}\label{sec:together}

In this section, we complete the proof of our main theorem. We first record a
standard two-point reduction showing that transcript indistinguishability,
together with a separation in spectral density, implies a lower bound for SDE.

\begin{lemma}[Transcript indistinguishability implies SDE hardness]
\label{lem:two-point-sde}
Let \(G_1,G_2\) be two labeled graphs on the same vertex set \([N]\), and suppose
that
\[
    W_1\bigl(\rho_{N_{G_1}},\rho_{N_{G_2}}\bigr) > 2\varepsilon .
\]
Let $\mathcal{D}_{G_i}$  denote the probability distribution over transcripts generated by performing the walks on $G_i$, for \(i=1,2\).  If
\[
    d_{\mathrm{TV}}(\mathcal{D}_{G_1},\mathcal{D}_{G_2}) < \frac12,
\]
then no algorithm which receives a transcript from \(\mathcal{D}_{G_i}\) can output an
\(\varepsilon\)-approximation to \(\rho_{N_{G_i}}\) with success probability at
least \(3/4\) for both \(i=1\) and \(i=2\).

More generally, if \(d_{\mathrm{TV}}(\mathcal{D}_{G_1},\mathcal{D}_{G_2})\le \tau\), then every such
algorithm has success probability at most \(1/2+\tau/2\) on at least one of the
two inputs.
\end{lemma}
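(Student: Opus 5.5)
This is the standard two-point (Le Cam) argument; I will prove the general statement with $\tau$ and read off the first claim from it.

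\emph{Step 1: success events are disjoint.} Fix any (possibly randomized) algorithm $\mathcal{A}$. Its output is a distribution $q$, and success on $G_i$ means $W_1(q,\rho_{N_{G_i}})\le\varepsilon$. Let
\[
E_i := \{q : W_1(q,\rho_{N_{G_i}})\le \varepsilon\}, \qquad i=1,2 .
\]
If some $q$ lay in $E_1\cap E_2$, the triangle inequality for $W_1$ would give
\[
W_1\bigl(\rho_{N_{G_1}},\rho_{N_{G_2}}\bigr)\le W_1(\rho_{N_{G_1}},q)+W_1(q,\rho_{N_{G_2}})\le 2\varepsilon,
\]
contradicting the hypothesis. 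So $E_1\cap E_2=\emptyset$. This is the only place the $W_1$ separation is used.

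\emph{Step 2: compare output distributions.} Write $P_i$ for the law of $\mathcal{A}$'s output when its input transcript $S\sim\mathcal{D}_{G_i}$, including the algorithm's internal randomness. $P_i$ is obtained by pushing $\mathcal{D}_{G_i}$ through the same Markov kernel (the algorithm), so the data-processing inequality for total variation gives
\[
d_{\mathrm{TV}}(P_1,P_2)\le d_{\mathrm{TV}}(\mathcal{D}_{G_1},\mathcal{D}_{G_2})\le\tau .
\]
A concrete way to see this is to couple $S^1\sim\mathcal{D}_{G_1}$ and $S^2\sim\mathcal{D}_{G_2}$ optimally, so that $\Pr[S^1\ne S^2]=d_{\mathrm{TV}}(\mathcal{D}_{G_1},\mathcal{D}_{G_2})$, and run $\mathcal{A}$ on both with shared random coins. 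The two outputs then differ with probability at most $\tau$.

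\emph{Step 3: bound the total success.} Using $P_2(E_2)\le 1-P_2(E_1)$ (by disjointness) and $P_2(E_1)\ge P_1(E_1)-\tau$,
\[
P_1(E_1)+P_2(E_2)\le P_1(E_1)+1-P_2(E_1)\le 1+\tau .
\]
Hence $\min_i P_i(E_i)\le \tfrac12+\tfrac{\tau}{2}$, which is the general claim. If $\tau<\tfrac12$, then $\tfrac12+\tfrac{\tau}{2}<\tfrac34$, so no algorithm succeeds with probability at least $3/4$ on both $G_1$ and $G_2$, which is the first claim.

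\emph{Main obstacle.} The argument is routine; the only care needed is measure-theoretic. The output space of distributions must carry a measurable structure so that $E_1,E_2$ are events. This is immediate if outputs are succinct finite vectors $(\tilde\lambda_i)$, since $W_1$ is then continuous in the output and $E_i$ is closed. One should also state explicitly that the labels in $\mathcal{D}_{G_i}$ come from a random permutation, so the algorithm sees only the transcripts and not the graph itself.
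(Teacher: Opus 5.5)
Your proof is correct and is essentially the paper's two-point argument. The paper turns the algorithm into a nearest-hypothesis distinguisher (output $1$ iff $W_1(\widehat\rho,\rho_{N_{G_1}})<W_1(\widehat\rho,\rho_{N_{G_2}})$), which amounts to your observation that the triangle inequality makes $E_1,E_2$ disjoint, and it proves the $3/4$ claim by contradiction while only citing the standard $1/2+d_{\mathrm{TV}}/2$ bound for the general case. You instead derive $P_1(E_1)+P_2(E_2)\le 1+\tau$ explicitly via data processing, which establishes the quantitative statement directly.
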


We defer the proof of the above lemma to Section \ref{append:proofoftwopoint}. We will apply Lemma \ref{lem:two-point-sde} to the two hard transcript distributions
induced by randomly labeled versions of \(G_1\) and \(G_2\). Since relabeling
does not change the spectral density, the same two-point argument applies to
these mixture transcript distributions. By Yao's minimax principle, a
distributional lower bound for these hard input distributions rules out any
randomized algorithm that succeeds on every labeled graph in the corresponding
worst-case input class.

\begin{proof}[Proof of \Cref{thm:main}] %
For the hard instances in Definition~\ref{def:hardinstances}, fix any constant \(d \ge 7\). For any sufficiently small constant \(\varepsilon > 0\), choose \(\ell\) and \(n\) such that
$2\varepsilon = \frac{1}{20 \ell^6 d^2}$
\text{and}
$n = 2^{\ell \log d / 10} \ge \ell^7 d^2$. Let \(G_1\) and \(G_2\) denote the resulting pair of graphs. Substituting \(\ell = (40 \varepsilon d^2)^{-1/6}\) and rearranging constants, we obtain
\[
n = 2^{\ell \log d / 10} \ge 2^{c_1 / \varepsilon^{1/6}},
\]
for some constant \(c_1 > 0\).

On the one hand, by Corollary~\ref{cor:w1_lower_bound}, we have $W_1(\rho_{N_{G_1}}, \rho_{N_{G_2}}) \ge \frac{1}{10 \ell^6 d^2} > 2\varepsilon$. %
Therefore, by Lemma  \ref{lem:two-point-sde}, any algorithm that outputs an
\(\varepsilon\)-approximation to the spectral density with probability at least
\(3/4\) on both inputs must be able to distinguish the corresponding transcript
distributions.

On the other hand, Lemma~\ref{lem:agreed} implies that the total variation distance between the corresponding transcript distributions satisfies $d_{\mathrm{TV}}(\mathcal{D}_{G_1}, \mathcal{D}_{G_2}) \le \frac{m^2 T^3}{n^c}$.

Consequently, by Lemma \ref{lem:two-point-sde},  given access only to the transcripts of \(m\) non-adaptive random walks of length \(T\), no algorithm can distinguish \(G_1\) from \(G_2\) with probability at least $3/4$, whenever \(m \cdot T \le 0.01\, n^{c/3}\). Since
\[
0.01\, n^{c/3} \ge 2^{c_2 / \varepsilon^{1/6}}
\]
for an appropriate constant \(c_2 < c_1 c / 3\), this completes the proof.
\end{proof}

\section*{Acknowledgments}
This work is supported in part by NSFC Grant 62272431 and Quantum Science and Technology - National Science and Technology Major Project (Grant No. 2021ZD0302901).

\bibliographystyle{alpha}
\bibliography{citations}

\appendix
\section{Deferred proofs from \Cref{sec:pre} and \Cref{sec:largedistance}}
\label{sec:deferred_proof}
\subsection{Proof of Fact \ref{fact:density_convolution}}\label{proof:fact:density_convolution}
    By definition, $\rho_{A_G}(x)=\frac{1}{n}\sum_{i=1}^{n}{\delta(x-\lambda_i)}$ and $\rho_{A_H}(x)=\frac{1}{m}\sum_{i=1}^{m}{\delta(x-\mu_i)}$. So the following holds for any $z\in \mathbb{R}$:
\begin{eqnarray*}
  \rho_{A_G}\ast \rho_{A_H}(z)&=& \int_{\mathbb{R}}{\rho_{A_G}(x) \cdot \rho_{A_H}(z-x)dx}\\
  &=&\int_{\mathbb{R}}{\left(\frac{1}{n}\sum_{i=1}^{n}{\delta(x-\lambda_i)}\right)\cdot \left(\frac{1}{m}\sum_{j=1}^{m}{\delta(z-x-\mu_j)}\right)dx}\\
  &=& \frac{1}{nm}\sum_{i=1}^{n}{\sum_{j=1}^{m}{\int_{\mathbb{R}}{\delta(x-\lambda_i})\delta(z-x-\mu_j)dx}}\\
  &=& \frac{1}{nm}\sum_{i=1}^{n}{\sum_{j=1}^{m}{\delta(z-\lambda_i-\mu_j)}}.
\end{eqnarray*}
By Fact \ref{fact:eigenvalue_product}, the last equality is exactly the spectral density function $\rho_{A_{G\msqr H}}(z)$.\hfill$\blacksquare$\par

\subsection{Proof of Lemma \ref{lem:limit_distribution}}\label{proof:lem:limit_distribution}
We will make use of the following eigenvalue concentration result. 

\begin{lemma}[Theorem 1.1 in \citep{huang2024optimaleigenvaluerigidityrandom}]\label{lem:eigenvalue_concentration}
    Fix $d$ as constant and $n$ sufficiently large, there is a constant $\alpha_d\geq 1$, depending only on $d$, such that with probability $1-n^{-\alpha_d}$, the eigenvalues $\{\lambda_i\}_{1\leq i\leq n}$ of the adjacency matrix of a random $d$-regular graph satisfy
    \[|\lambda_i-\gamma_i|\leq \sqrt{d-1}\cdot  n^{-2/3+o(1)}(\min\{i,n-i+1\})^{-1/3}\]
    for every $2\leq i\leq n$, where $\gamma_i$ satisfies $\int_{\gamma_i}^{2\sqrt{d-1}}{\rho_d(x)dx}=\frac{i-1/2}{n-1}$ for $2\leq i\leq n$.
\end{lemma}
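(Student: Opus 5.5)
This statement is imported verbatim from \citep{huang2024optimaleigenvaluerigidityrandom}, so in the paper it is taken as given. If I had to reprove it, I would follow the standard three-stage route of random matrix theory: a local law for the resolvent, then counting-function rigidity, then conversion to individual eigenvalue locations.

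\textbf{Local law.} Let $G(z)=(A_H-z)^{-1}$ and let $m_d(z)$ be the Stieltjes transform of $\rho_d$. The goal is a local Kesten--McKay law: with overwhelming probability, $\max_{i,j}|G_{ij}(z)-P_{ij}(z)|$ is small uniformly for $\operatorname{Im} z=\eta\geq n^{-1+o(1)}$, where $P$ is the Green's function of the infinite $d$-regular tree. This implies $|\frac1n\operatorname{tr}G(z)-m_d(z)|\leq n^{o(1)}/(n\eta)$, with the sharper form $n^{o(1)}/(n\eta)$ near the spectral edges that optimal rigidity requires.

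The tool I would use is local resampling (switchings) of edges around a vertex. Since a random $d$-regular graph is locally tree-like, resampling the boundary of a radius-$\ell$ ball yields a self-consistent equation expressing $G_{oo}$ through Green's functions of the graph with that ball removed. The tree recursion then identifies the fixed point as the Kesten--McKay quantity. One closes the estimate by a bootstrap (continuity) argument in $\eta$, starting from $\eta\asymp 1$ and decreasing to $n^{-1+o(1)}$. The fluctuation terms are controlled by high-moment estimates of the resolvent, using the exchangeability that switching provides.

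\textbf{Rigidity of the counting function.} Feed the local law into the Helffer--Sjöstrand formula to obtain
\[
\Bigl|\#\{i:\lambda_i\geq E\}-n\int_E^{2\sqrt{d-1}}\rho_d(x)\,dx\Bigr|\leq n^{o(1)}
\]
uniformly in $E$, after discarding the trivial eigenvalue $\lambda_1=d$. One also needs that no eigenvalues lie far outside $[-2\sqrt{d-1},2\sqrt{d-1}]$ beyond $n^{-2/3+o(1)}$; this follows from the edge version of the local law.

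\textbf{Conversion to eigenvalue locations.} Here I use the square-root vanishing of $\rho_d$ at $\pm2\sqrt{d-1}$. The classical location $\gamma_i$ sits at distance of order $\sqrt{d-1}\,(i/n)^{2/3}$ from the edge. A counting error of $n^{o(1)}$ then translates into a location error of
\[
n^{o(1)}\bigl/\bigl(n\,\rho_d(\gamma_i)\bigr)\asymp \sqrt{d-1}\, n^{-2/3+o(1)}\, i^{-1/3},
\]
and symmetrically for $n-i+1$, which is exactly the stated bound.

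\textbf{Main obstacle.} The hard step is the local law at the optimal scale near the edges. There the Kesten--McKay self-consistent equation is unstable, with stability factor of order $\sqrt{\kappa+\eta}$, so the error terms must be shown to carry a matching smallness. This is the core technical contribution of \citep{huang2024optimaleigenvaluerigidityrandom}, and I would not attempt to redo it. For the purposes of this paper, the statement is simply invoked to derive Lemma~\ref{lem:limit_distribution}.
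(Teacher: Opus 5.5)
Your proposal agrees with the paper: the lemma is not proved there but imported from Theorem~1.1 of \citep{huang2024optimaleigenvaluerigidityrandom}, with the single modification of rescaling from the normalized matrix $A/\sqrt{d-1}$ to the adjacency matrix (the source of the factor $\sqrt{d-1}$), so it is a restatement rather than a verbatim quote. Your outline (local law, counting-function rigidity, conversion at the edges) is supplementary, and nothing in the paper depends on it.
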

\begin{remark}
Compared to the original statement in \citep{huang2024optimaleigenvaluerigidityrandom}, we formulate the result for the adjacency matrix rather than their normalized matrix $A/\sqrt{d-1}$.     
\end{remark}

Next we use Lemma \ref{lem:eigenvalue_concentration} to prove Lemma \ref{lem:limit_distribution}. For notational convenience, let $\rho_\lambda = \rho_{A_H} = \frac{1}{n}\sum_{i=1}^{n} \delta(x-\lambda_i)$. Define
\[
\rho_\gamma = \frac{1}{n}\Big(\delta(x-\lambda_1) + \sum_{i=2}^{n} \delta(x-\gamma_i)\Big).
\]

Let $F_\lambda$, $F_\gamma$, and $F_d$ denote the CDFs of $\rho_\lambda$, $\rho_\gamma$, and $\rho_d$, respectively.  
By definition, 
 \[F_d(\gamma_i)=\int_{-2\sqrt{d-1}}^{\gamma_i}{\rho_d(x)dx}=1-\frac{i-1/2}{n-1}.\]
    \[F_\gamma(\gamma_i)=\frac{1}{n}|\text{number of eigenvalues } \leq \gamma_i|=1-\frac{i-1}{n}.\]

Thus, $|F_d(\gamma_i) - F_\gamma(\gamma_i)| \le 1/n$ for $2\leq i\leq n$. Additionally, for $x \in (\gamma_i, \gamma_{i-1})$, we have $F_\gamma(x) = F_\gamma(\gamma_i)$ and 
$F_d(\gamma_i) \le F_d(x) \le F_d(\gamma_{i-1}) = F_d(\gamma_i) + \frac{1}{n}$.  
Thus,
\[
|F_d(x) - F_\gamma(x)| \le |F_d(\gamma_i) - F_\gamma(\gamma_i)| + \frac{1}{n} \le \frac{2}{n},
\]
which holds for all $x \in [-2\sqrt{d-1}, 2\sqrt{d-1}]$.  
By the definition of $W_1$ distance,
\[
W_1(\rho_d, \rho_\gamma) = \int_\mathbb{R} |F_d(x) - F_\gamma(x)| dx \le 4\sqrt{d-1} \cdot \frac{2}{n} + \frac{d}{n}.
\]

Next, by Lemma~\ref{lem:eigenvalue_concentration},
\[
W_1(\rho_\gamma, \rho_\lambda) = \frac{1}{n} \sum_{i=2}^n |\lambda_i - \gamma_i|
\le \frac{\sqrt{d-1}}{n} n^{-2/3+o(1)} \cdot 2 \sum_{i=2}^{\lceil n/2 \rceil} i^{-1/3}
\le 2\sqrt{d-1} \cdot n^{-1+o(1)},
\]
holding with probability $1 - n^{-\alpha}$.  
    
Combining the two bounds gives
\[
W_1(\rho_d, \rho_\lambda) \le W_1(\rho_d, \rho_\gamma) + W_1(\rho_\gamma, \rho_\lambda) \le 3\sqrt{d-1} \cdot n^{-1+o(1)},
\]
which completes the proof.\hfill$\blacksquare$\par

\subsection{Proof of Lemma \ref{lem:rhoagc1c2}}\label{proof:lem:rhoagc1c2}
By Fact \ref{fact:density_convolution}, we have 
\[ \rho_{A_{G_1}}= \rho_{A_H} \ast \rho_{A_{C_1}} ; \rho_{A_{G_2}}= \rho_{A_H} \ast \rho_{A_{C_2}}. \]
Thus, 
\[
W_1 (\rho_{A_{G_1}}, \rho_{A_{G_2}})=W_1 (\rho_{A_H} \ast \rho_{A_{C_1}}, \rho_{A_H} \ast\rho_{A_{C_2}})
\]

We first control the effect of replacing $\rho_{A_H}$ by its limit $\rho_d$.
By Lemma~\ref{lem:limit_distribution} and Fact~\ref{fact:density_convolution}, we have %
\begin{equation}\label{eq:conv_w1:ub1}
    W_1 (\rho_{A_H} \ast \rho_{A_{C_1}}, \rho_d \ast \rho_{A_{C_1}}) \leq W_1 (\rho_{A_H}, \rho_d) \leq
   n^{-0.99} ;
\end{equation}
\begin{equation}\label{eq:conv_w1:ub2}
    W_1 (\rho_{A_H} \ast \rho_{A_{C_2}}, \rho_d \ast \rho_{A_{C_2}}) \leq W_1 (\rho_{A_H}, \rho_d) \leq
   n^{-0.99}.
\end{equation}
Here, the first inequalities in \eqref{eq:conv_w1:ub1} and \eqref{eq:conv_w1:ub2} follow from
\cite[Lemma~5.2]{santambrogio2015optimal}\footnote{Note that Item~1 of \cite[Lemma~5.2]{santambrogio2015optimal} does not require the regularizing kernel assumption; only Item~2 relies on that assumption.}.

By the triangle inequality, we have that 
\begin{eqnarray*}
   W_1 (\rho_d \ast \rho_{A_{C_1}}, \rho_d \ast \rho_{A_{C_2}})
   \leq  W_1 (\rho_d \ast \rho_{A_{C_1}}, \rho_{A_H} \ast \rho_{A_{C_1}}) &+& W_1 (\rho_{A_H} \ast \rho_{A_{C_1}}, \rho_{A_H} \ast\rho_{A_{C_2}}) \\
  &+& W_1 (\rho_{A_H} \ast \rho_{A_{C_2}}, \rho_d \ast \rho_{A_{C_2}}).
\end{eqnarray*}
Rearranging, we have
\begin{align*}
  &W_1 (\rho_{A_H} \ast \rho_{A_{C_1}}, \rho_{A_H} \ast\rho_{A_{C_2}})\\
  \geq & W_1 (\rho_d \ast \rho_{A_{C_1}}, \rho_d \ast \rho_{A_{C_2}}) - W_1 (\rho_d \ast \rho_{A_{C_1}}, \rho_{A_H} \ast \rho_{A_{C_1}}) - W_1 (\rho_{A_H} \ast \rho_{A_{C_2}}, \rho_d \ast \rho_{A_{C_2}})\\
  \geq & W_1 (\rho_d \ast \rho_{A_{C_1}}, \rho_d \ast \rho_{A_{C_2}}) - O \left(n^{-0.99} \right).
\end{align*}

\hfill$\blacksquare$\par

\subsection{Proof of Lemma \ref{lemma:W1_convolution}}\label{proof:lemma:W1_convolution}
Let \(F_d(x)=\int_{-\infty}^{x}{\rho_d(s)ds}\) be the CDF of $\rho_d$. In the following, we let 

\[
\{\lambda_i\}_{0\leq i\leq 2\ell-1}=\left\{2\cdot \cos\left(\frac{2\pi i}{\ell}\right)\right\}_{0\leq i\leq 2\ell-1}, \{r_i\}_{0\leq i\leq 2\ell-1}=\left\{2\cdot \cos\left(\frac{\pi i}{\ell}\right)\right\}_{0\leq i\leq 2\ell-1}
\]
be the eigenvalues of $A_{C_1}$ and of $A_{C_2}$, respectively. By the definition of convolution, for any $x\in \mathbb{R}$,
    \[
    \rho_d \ast \rho_{A_{C_1}}(x)=\frac{1}{2\ell}\sum_{i=0}^{2\ell-1}{\rho_d(x-\lambda_i)}
    \]
\[
   \rho_d \ast \rho_{A_{C_2}}(x)=\frac{1}{2\ell}\sum_{i=0}^{2\ell-1}{\rho_d(x-r_i)}
    \]
Let $F_1,F_2$ be the CDFs of $\rho_d \ast \rho_{A_{C_1}}$ and $\rho_d \ast \rho_{A_{C_2}}$ respectively. By definition, $F_1(x)=\frac{1}{2\ell}\sum_{i=0}^{2\ell-1}{F_d(x-\lambda_i)}$ and $F_2(x)=\frac{1}{2\ell}\sum_{i=0}^{2\ell-1}{F_d(x-r_i)}$ for any $x\in \mathbb{R}$. Therefore, 
\begin{eqnarray}
   W_1 (\rho_d \ast \rho_{A_{C_1}}, \rho_d \ast \rho_{A_{C_2}}) &=& \int_{\mathbb{R}}{\left|F_1(t)-F_2(t)\right| dt} \nonumber \\
    &= &\int_{\mathbb{R}} \frac{1}{2
   \ell}  \left| \sum_{i = 0}^{2 \ell-1} F_d (t - \lambda_i) - F_d (t - r_i) \right| d
   t. \label{eqn:w1_Fd}
\end{eqnarray}
 
Next, we use the fact that $F_d(x)=0$ for all $x \le -2\sqrt{d-1}$.  
Consider
\[
t \le -2\sqrt{d-1} - 2\cos\!\left(\frac{\pi}{\ell}\right).
\]
For any $r_i$, this implies
\[
t - r_i \le -2\sqrt{d-1} - 2\cos\!\left(\frac{\pi}{\ell}\right) - r_i.
\]
In particular, if $r_i \ge -2\cos\!\left(\frac{\pi}{\ell}\right)$, then
$t-r_i \le -2\sqrt{d-1}$, and hence $F_d(t-r_i)=0$.
An identical argument shows that $F_d(t-\lambda_i)=0$ whenever
$\lambda_i \ge -2\cos\!\left(\frac{\pi}{\ell}\right)$.

Without loss of generality, assume that $\ell$ is even.
Using the identities
\[
\lambda_i = 2\cos\!\left(\frac{2\pi i}{\ell}\right),
\qquad
r_i = 2\cos\!\left(\frac{\pi i}{\ell}\right),
\]
the condition $\lambda_i, r_i < -2\cos\!\left(\frac{\pi}{\ell}\right)$
occurs only for
\[
\lambda_{\ell/2}=\lambda_{3\ell/2}=r_\ell
=2\cos(\pi)=-2.
\]
{(If $\ell$ is odd, the condition $\lambda_i,r_i<
-2\cos\!\left(\frac{\pi}{\ell}\right)$ occurs only for $r_\ell=-2$;
the argument below remains unchanged.)}

Therefore, we have
\begin{eqnarray}
  && \int_{\mathbb{R}} \frac{1}{2 \ell}  \left| \sum_{i = 0}^{2 \ell-1} F_d (t -
  \lambda_i) - F_d (t - r_i) \right| d t \nonumber\\
  & \geq & \int_{- 2 \sqrt{d - 1}
  - 2}^{- 2 \sqrt{d - 1} - 2 \cos \left( \frac{\pi}{\ell} \right)} \frac{1}{2 \ell} 
  \left| \sum_{i = 0}^{2 \ell-1} F_d (t - \lambda_i) - F_d (t - r_i) \right| d t \nonumber\\
  &=& \int_{- 2 \sqrt{d - 1}
  - 2}^{- 2 \sqrt{d - 1} - 2 \cos \left( \frac{\pi}{\ell} \right)} \frac{1}{2 \ell} 
  \left| F_d(t-\lambda_{\ell/2})+F_d(t-\lambda_{3\ell/2})-F_d(t-r_l) \right| d t \nonumber \\
  & = & \int_{- 2 \sqrt{d - 1} - 2}^{- 2 \sqrt{d - 1} - 2 \cos \left(
  \frac{\pi}{\ell} \right)} \frac{1}{2 \ell} F_d (t + 2) d t \nonumber \\
  & = & \int_{- 2 \sqrt{d - 1} - 2}^{- 2 \sqrt{d - 1} - 2 \cos \left(
  \frac{\pi}{\ell} \right)} \frac{1}{2 \ell}  \int_{- 2 \sqrt{d - 1}}^{t + 2} \rho_d
  (x) d x d t.\label{eqn:w1_intermediate}
\end{eqnarray}

We now use Lemma~\ref{thm:concave_rho_d}, which states that for $d\ge 7$ the density
$\rho_d(t)$ is concave. Let $\alpha\in(0,1)$. By Jensen’s inequality,
\begin{align*}
\rho_d\!\left((1-\alpha)(-2\sqrt{d-1})+\alpha(t+2)\right)
&\ge (1-\alpha)\rho_d(-2\sqrt{d-1})+\alpha\rho_d(t+2) \\
&= \alpha\rho_d(t+2),
\end{align*}
where the equality uses the fact that $\rho_d(x)=0$ for all $x\le -2\sqrt{d-1}$.

Let
\[
s=(1-\alpha)(-2\sqrt{d-1})+\alpha(t+2).
\]
Solving for $\alpha$ yields
\[
\alpha=\frac{s+2\sqrt{d-1}}{2\sqrt{d-1}+t+2}.
\]
Substituting this expression back gives
\[
\rho_d(s)\;\ge\;
\frac{s+2\sqrt{d-1}}{2\sqrt{d-1}+t+2}\cdot\rho_d(t+2).
\]

Thus, we have
\begin{eqnarray}
  \int_{- 2 \sqrt{d - 1}}^{t + 2} \rho_d (s) d s & \geq & \int_{- 2
  \sqrt{d - 1}}^{t + 2} \frac{s + 2 \sqrt{d - 1}}{2 \sqrt{d - 1} + t + 2}
  \rho_d (t + 2) d s \nonumber \\
  & = & \frac{(t + 2)^2 - 4 (d - 1) }{2 \left( 2 \sqrt{d - 1}
  + t + 2 \right)} \cdot \rho_d (t + 2) +\frac{2\sqrt{d-1}\cdot (t+2+2\sqrt{d-1})}{2 \sqrt{d - 1} + t + 2}\cdot \rho_d (t + 2) \nonumber\\
  & =& \frac{(t+2)^2+4\sqrt{d-1}(t+2)+4(d-1)}{2 \left( 2 \sqrt{d - 1}
  + t + 2 \right)}\cdot  \rho_d (t + 2).\label{eqn:w1final}
\end{eqnarray}

Therefore, by Eq. (\ref{eqn:w1_Fd}), (\ref{eqn:w1_intermediate}) and (\ref{eqn:w1final}), we have
\[ W_1 (\rho_d \ast \rho_{A_{C_1}}, \rho_d \ast \rho_{A_{C_2}})  \geq \int_{- 2 \sqrt{d - 1} - 2}^{- 2 \sqrt{d - 1} - 2 \cos \left(
  \frac{\pi}{\ell} \right)} \frac{1}{2 \ell}  \frac{(t+2)^2+4\sqrt{d-1}(t+2)+4(d-1)}{2 \left( 2 \sqrt{d - 1}
  + t + 2 \right)}\cdot  \rho_d (t + 2) d t.
  \]
\hfill$\blacksquare$\par

\subsection{Proof of Lemma \ref{lemma:lower_bound_rho}}\label{proof:lemma:lower_bound_rho}
Notice that 
\[
\rho_d(x) = \frac{d \sqrt{4(d-1)-x^2}}{2\pi (d^2-x^2)} \;\ge\; \frac{1}{2\pi d} \sqrt{4(d-1)-x^2}
\quad \text{for } x \in \bigl(-2\sqrt{d-1},\, -1.99\sqrt{d-1}\bigr).
\]

Set $\lambda = x + 2\sqrt{d-1}$, so that $\lambda \in (0, 0.01 \sqrt{d-1})$. Then
\begin{align*}
\rho_d(x) 
&\ge \frac{1}{2\pi d} \sqrt{4(d-1) - x^2} 
= \frac{1}{2\pi d} \sqrt{4(d-1) - (\lambda - 2\sqrt{d-1})^2} \\
&= \frac{1}{2\pi d} \sqrt{4\sqrt{d-1}\, \lambda - \lambda^2} 
\ge \frac{1}{2\pi d} \sqrt{3.99 \sqrt{d-1}\, \lambda} 
\ge \frac{\sqrt{\lambda}}{2\pi d^{3/4}}.
\end{align*}

This completes the proof.\hfill$\blacksquare$\par

\section{Lower-bounding the probabilities of good events in Section~\ref{sec:indisting}}\label{sec:events}
We restate the good events defined in Section~\ref{sec:indisting} for convenience, to complete the proof of Lemma~\ref{lem:agreed}.

\medskip
\noindent\textbf{Event 1:} For all $k \in [m]$, $H^1_k$ and $H^2_k$ remain trees.

\noindent\textbf{Event 2:} For all $k \in [m]$ and depth $x \neq y$, $H^1_{k,x} \cap H^1_{k,y} = \emptyset$ and $H^2_{k,x} \cap H^2_{k,y} = \emptyset$ if $\ell \mid (x-y)$.

\noindent\textbf{Event 3:} For all $j \neq k$, $H^1_j \cap H^1_k = \emptyset$ and $H^2_j \cap H^2_k = \emptyset$.
\medskip

To this end, we apply the following lemma for random $d$-regular graph.

\begin{lemma}[Proposition 4.1 in \cite{bauerschmidt2019local}]\label{lem:local tree}
    Let $\delta > 0$ and $\omega \geq 1$ be an integer. There is $0 <c < \delta/(2\omega +2)$ such that, if $L \leq c\log_{d-1} n$, then the following holds for a uniformly chosen random $d$-regular $G$ on vertex set $[n]$ for sufficiently large $n$, with probability at least $1 - o(n^{-\omega+\delta})$.
    \begin{itemize}
        \item Most $L$-neighborhoods are trees:
        \[
        |\{v \in [n] \mid \text{the subgraph $N_L(v,G)$ contains a cycle}\}| \leq n^\delta.
        \]
    \end{itemize}
    Here, $N_L(v,G)$ is the subgraph of $G$ induced by $L$-hop neighborhood rooted at $v$.
\end{lemma}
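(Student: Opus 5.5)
The plan is to work in the configuration (pairing) model and transfer back to the uniform random $d$-regular graph. For fixed $d$, the pairing model produces a simple graph with probability bounded below by a constant $e^{-(d^2-1)/4}+o(1)$, and conditioned on simplicity it is uniform. Hence any event of probability $p$ in the pairing model has probability at most $O(p)$ for the uniform model. So it suffices to prove the bound in the pairing model.

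\emph{Reduction to counting short cycles.} If $N_L(v,G)$ contains a cycle, then $G$ has a cycle of length at most $2L+1$ (counting loops and multi-edges as cycles of length $1,2$) within distance $L$ of $v$. A fixed cycle $\mathcal{C}$ of length $k\le 2L+1$ has at most $k\cdot d(d-1)^{L}$ vertices within distance $L$. Let $Y$ be the number of cycles of length at most $2L+1$. Then
\[
|\{v : N_L(v,G)\text{ contains a cycle}\}| \le Y\cdot (2L+1)\,d\,(d-1)^{L} \le Y\cdot n^{c}\cdot O(\log n),
\]
using $L\le c\log_{d-1}n$. It therefore suffices to show that $\Pr[Y\ge K]\le o(n^{-\omega+\delta})$ with threshold $K:=n^{\delta-2c}$. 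For large $n$ this threshold makes the right-hand side at most $n^{\delta}$.

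\emph{Moment bound.} I would bound the $p$-th moment of $Y$ for the fixed integer $p:=\omega+1$, aiming for $\E[Y^p]\le (C\mu)^p$ with $\mu:=\sum_{k\le 2L+1}(d-1)^k\le n^{2c}\cdot O(1)$. Expand $Y^p$ as a sum over ordered $p$-tuples of short cycles, and group the tuples by the isomorphism type of their union $U$, a connected-component-wise structure with $v(U)$ vertices and $e(U)$ edges. In the pairing model, a fixed set of $e$ pairings is present with probability $\prod_{i<e}(dn-2i-1)^{-1}=(1+o(1))(dn)^{-e}$. The number of labeled placements is at most $n^{v}d^{2e}$. So each type contributes $O(1)^{e}\,n^{v-e}$.

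\emph{Splitting by excess.} If the $p$ cycles are pairwise vertex-disjoint, the union has $v=e$. Summing over lengths then gives at most $(C\mu)^p$. Otherwise the union has excess $e-v\ge 1$, which gains a factor $n^{-1}$ per unit of excess. The number of types is at most $(2L+1)^{O(p)}\cdot(d-1)^{O(pL)}=n^{O(pc)}$, with $p$ fixed. Hence these terms are $o(1)$ times the main term once $c$ is small. This yields $\E[Y^p]\le (C\mu)^p$, and Markov gives
\[
\Pr[Y\ge K]\le \frac{(Cn^{2c})^{p}}{n^{p(\delta-2c)}} = C^p\, n^{-p(\delta-4c)}.
\]
With $p=\omega+1$ and $c<\delta/(2\omega+2)$, the exponent satisfies $p(\delta-4c)\ge \omega-\delta$ up to the constant slack. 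The slack is absorbed by shrinking $c$ further if necessary, giving $o(n^{-\omega+\delta})$.

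\emph{Main obstacle.} The step I expect to be hardest is the uniform control of the overlapping-cycle terms in $\E[Y^p]$. One must check that the combinatorial count of union shapes, which is polynomial in $n^{c}$ with exponent linear in $p$, is genuinely beaten by the $n^{-(e-v)}$ factor for every overlap pattern. This includes patterns where the cycles share long paths, since there the excess is only $1$ but the shape count can be large. I would handle it by encoding each union as a core of excess $e-v$ with at most $O(p)$ branch vertices and $O(p)$ paths of length $\le 2L+1$. This bounds the number of shapes by $(2L+2)^{O(p)}$, while the placements and lengths are already counted in the $(d-1)^{\text{length}}$ weights. That reduces the overlap terms to at most $n^{O(pc)-1}(C\mu)^{p}$, which is negligible for $c$ small relative to $1/p$.
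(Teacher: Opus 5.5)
The paper gives no proof of this lemma; it quotes it from Bauerschmidt--Huang--Yau, so your argument has to stand on its own. Several parts of it are sound: the reduction (every bad vertex lies within distance $L$ of a cycle of length at most $2L+1$, so the number of bad vertices is at most $Y\cdot n^{c+o(1)}$), the transfer from the pairing model for fixed $d$, and the plan of bounding a fixed moment of $Y$. The genuine gap is the final exponent count.

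\textbf{The moment order $p=\omega+1$ is too small.} With $p=\omega+1$, Markov gives failure probability $C^p n^{-(\omega+1)(\delta-4c)}$. This is $o(n^{-\omega+\delta})$ only if $(\omega+1)(\delta-4c)>\omega-\delta$. Since the left side is below $(\omega+1)\delta$, that forces $\delta>\omega/(\omega+2)$. For the parameters the paper later uses ($\delta=0.1$, $\omega=2$), you get a failure probability of only about $n^{-0.3}$ instead of $o(n^{-1.9})$. This is not a constant slack that shrinking $c$ can absorb: shrinking $c$ only raises the exponent toward $(\omega+1)\delta$.

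The cause is structural. $Y$ is typically of order $\mu\approx n^{2c}\gg 1$, and your threshold $K=n^{\delta-2c}$ exceeds $\mu$ only by a factor $n^{\delta-4c}$. So each moment buys only $n^{-(\delta-4c)}$, and you need a moment of order $p\gtrsim(\omega+1)/\delta$, not $\omega+1$. The repair is to take, say, $p=\lceil 2(\omega+1)/\delta\rceil$ and $c\le\delta/8$, so that $p(\delta-4c)\ge\omega+1$. Then shrink $c$ further until $pc$ is small enough for the overlap terms. This is compatible with the statement, which only asks for some $c<\delta/(2\omega+2)$.

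\textbf{The overlap bookkeeping is also wrong as written.} An overlapping union of $p$ short cycles need not have $O(p)$ branch vertices. Two cycles of length $2L+1$ can meet in $\Theta(L)$ disjoint arcs, giving excess $\Theta(L)$. For the same reason, the number of union types is not $n^{O(pc)}$.

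The correct count stratifies by excess $s\ge 1$:
\begin{itemize}
\item A connected union of excess $s$ has minimum degree at least $2$, so it has at most $2s$ vertices of degree at least $3$ and at most $3s$ suspended paths. Each path lies on one of the cycles, so it has length at most $2L+1$.
\item This gives $(C\log n)^{O(s)}$ shapes, while each unit of excess gains a factor $n^{-1}$.
\item The half-edge weights contribute at most $d^{e}\le d^{p(2L+1)}=n^{O(pc)}$ in total.
\end{itemize}
Summing over $s$, the overlap terms are at most $n^{O(pc)-1+o(1)}$, which is negligible once $pc$ is small. Tuples with repeated cycles have excess $0$ without being disjoint, contrary to your dichotomy. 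They are handled cleanly by working with factorial moments, where they contribute $O(\mu^{j})$ with $j<p$.
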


\paragraph{Bound on Event 1.}
Consider the $i$-th projected walk on $H$, say $W_i = (v_0,v_1,\cdots,v_T)$.
\begin{claim}
    $\Pr[\text{$W_i$ contains a cycle}] \leq \frac{T^2}{n^c}$, where $c > 0$ is a small universal constant.
\end{claim}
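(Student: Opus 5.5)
The plan is to bound the probability that the projected walk $W_i$ closes a cycle by a union bound over the times at which it could do so. The main tool is the local-tree property of Lemma~\ref{lem:local tree}. Recall from the discussion of projected walks that $W_i$ is a lazy random walk on $H$ with laziness $\frac{2}{d+2}$, started from a uniformly random vertex $v_0$ of $H$, since the uniform distribution on $V(G_1)$ projects to the uniform distribution on $V(H)$. Self-loops (lazy steps) are ignored when forming $H^1_k$. So "$W_i$ contains a cycle" means that the subgraph traced by its non-lazy steps is not a tree. This happens exactly when some step goes along a new edge into an already visited vertex.

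\textbf{Step 1: short walks near the start.} Apply Lemma~\ref{lem:local tree} with $\omega=2$ and some fixed $\delta<1$, say $\delta=1/2$. Let $L=c\log_{d-1}n$, and let $B\subseteq V(H)$ be the set of vertices whose $L$-neighborhood contains a cycle. With probability $1-o(n^{-3/2})$ over $H$ we have $|B|\le n^{\delta}$. Since $v_0$ is uniform, $\Pr[v_0\in B]\le n^{\delta-1}$. If $v_0\notin B$, then every edge traversed within the first $L$ non-lazy steps lies inside the tree $N_L(v_0,H)$, so no cycle can be created during that phase.

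\textbf{Step 2: long walks, via stationarity.} For longer walks, use that the lazy walk on the regular graph $H$ is stationary: every $v_t$ is marginally uniform on $V(H)$. Split the walk into windows of $L/2$ non-lazy steps each. A cycle closed at time $t$ either has length at most $L$, or it does not. If it has length at most $L$, it lies inside a window of length $L$ around some time $s$; then $N_L(v_s,H)$ contains a cycle, so $v_s\in B$. By stationarity and a union bound over $s\le T$, this case has probability at most $T n^{\delta-1}$. If instead the cycle has length more than $L$, the walk revisits, at time $t$, a vertex $v_s$ with $t-s>L$ while entering through a new edge. Here I would argue as follows. Condition on the walk up to time $s$, and then on the walk over the interval between $s$ and $t$. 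Because of the laziness and the local tree structure at typical vertices, after $L/2$ steps the walk is spread out: its position has collision probability at most roughly $(d-1)^{-L/2}+\text{(mixing term)}$. Hence $\Pr[v_t=v_s]\le n^{-c'}$ for each pair $(s,t)$. A union bound over the at most $T^2$ pairs gives $T^2 n^{-c'}$.

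\textbf{Main obstacle.} The hard step is the bound $\Pr[v_t=v_s]\le n^{-c'}$ for $t-s>L$. The plan is to obtain it from the spectral gap of $H$: by Friedman's theorem, or the rigidity result of Lemma~\ref{lem:eigenvalue_concentration}, $\lambda_2\le 2\sqrt{d-1}+o(1)$. For the lazy walk this gives
\[
\Pr[v_t=u\mid v_s=u]\le \frac1n+\Bigl(\tfrac{2}{d+2}+\tfrac{d}{d+2}\cdot\tfrac{2\sqrt{d-1}+o(1)}{d}\Bigr)^{t-s}.
\]
Since $d\ge7$, the base is bounded below $1$. For $t-s>L=c\log_{d-1}n$, the right-hand side is therefore at most $n^{-c'}$ for a constant $c'>0$. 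Combining the three contributions yields
\[
\Pr[W_i\text{ contains a cycle}]\le n^{\delta-1}+Tn^{\delta-1}+T^2n^{-c'}\le \frac{T^2}{n^{c}}
\]
after shrinking $c$. If the spectral route introduces issues with the revisit being through a new edge, that only helps, since the event we bound contains it.
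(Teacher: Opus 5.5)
Your proposal is correct and follows essentially the paper's argument: cycles of length at most $L$ are ruled out by the local-tree lemma together with stationarity and a union bound over times, and longer cycles by the spectral bound $e_u^\top P^{k} e_u\le \frac1n+\lambda^{k}$ with a union bound over $O(T^2)$ terms. The paper phrases the second case slightly differently, conditioning on the prefix up to time $t-L$ and bounding the probability of stepping into $C_t=\{v_0,\dots,v_{t-L}\}$, while you sum over pairs $(s,t)$ with $t-s>L$. One small fix: a short cycle need not lie within a short time window, but your conclusion still holds, because every vertex of a traced cycle of length at most $L$ is visited by the walk and lies in $B$.
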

\begin{proof}
    We define $S := \{v \mid \text{$N_L(v,H)$ is not a $d$-ray tree} \}$, for $L = 0.01 \cdot \log n/\log d$. Lemma \ref{lem:local tree} tells that $|S| \leq n^{0.1}$ w.p. $\geq 1 - o(\frac{1}{n})$ over randomness of $H$, by setting $\delta = 0.1$ and $\omega=2$. We conditioned on $H$ s.t. $|S| \leq n^{0.1}$.

    Define event $\mathcal{E}:=\{\text{$W_i$ hits at least one vertex in $S$}\}$. Since $\Pr[v_i = u] = \frac{1}{n}$ for every $i$ and $u \in V_H$, a direct argument using Markov's inequality tells that 
    \[
    \Pr[\mathcal{E}] \leq T \cdot n^{-0.9},
    \]
    hence we have
    \begin{align*}
        &\quad\Pr[\text{$W_i$ contains a cycle}] \\
        &\leq \Pr[\text{$W_i$ contains a cycle of length $\leq L$}] + \Pr[\text{$W_i$ only contains cycle of length $\geq L+1$}] \\
        &\leq \Pr[\mathcal{E}] + \Pr[\text{$W_i$ only contains cycle of length $\geq L+1$}] \\
        &\leq T \cdot n^{-0.9} + \Pr[\text{$W_i$ only contains cycle of length $\geq L+1$}].
    \end{align*}

    To bound the latter term, we bound the probability that the $(t+1)$-th step finishes a cycle. Here we only care for $t \in [L, T-1]$, since any cycle is of length $\geq L+1$.
    
    We define $C_t = \{v_0,v_1,\dots,v_{t-L}\}$ to be vertices visited during the first $t-L$ steps of $W_i$. We denote by $F_{t+1}$ the event that the $(t+1)$-th step of $W_i$ finishes a cycle of length $\geq L+1$, which occurs only when $v_{t+1} \in C_t$. Let $\1_{F_{t+1}}$ be the indicator of $F_{t+1}$, then for any fixed $C_t$, we have
    \begin{align*}
        \E[\1_{F_{t+1}} \mid C_t] &\leq \sum_{u \in C_t} \Pr[v_{t+1} = u \mid C_t]\\
        &\leq \sum_{u \in C_t} e_{v_{t-L}}\cdot P^{L+1} \cdot e_{u} \\
        & \leq \sum_{u \in C_t} (\frac{1}{n}+\lambda^{L+1}) \\
        & \leq T \cdot n^{-c}
    \end{align*}
    The last inequality follows from the fact that $H$ is a (near) Ramanujan graph (i.e. $\lambda = \max\{\lambda_2,|\lambda_n|\} = \Theta(1/\sqrt{d})$) and $L = \Theta(\log_d n)$. Furthermore, we have
    \begin{align*}
        \E[\1_{F_{t+1}}] &= \E_{C_t}[\E[\1_{F_{t+1}} \mid C_t]] \leq T \cdot n^{-c}.
    \end{align*}
    Hence
    \begin{align*}
        &\quad\Pr[\text{$W_i$ only contains cycle of length $\geq L+1$}] \\
        &\leq \sum_{t = L}^{T-1} \E[F_{t+1}] \\
        &\leq T^2 \cdot n^{-c}.
    \end{align*}
    This finishes the proof.
\end{proof}

    By a union bound over all $m$ projected walks, we have
    \[
    \Pr[\text{Event 1}] \geq 1 - \frac{mT^2}{n^c}.
    \]

\paragraph{Bound on Event 2.}
We consider the $T$-step lazy random walk on $H$. We denote by $v_i$ the vertex visited at $i$-th step, and $x_i$ the depth. Here, the notion \emph{depth} is defined analogously in the coupling process. We define event $\mathcal{E}_k$ characterizing the following bad event:
\[
\text{$\mathcal{E}_k$: $\exists s < t$ s.t. $x_t-x_s = k\ell$ and $v_s = v_t$.}
\]
In other words, the event that we visit the same node after traversing a cycle of length $\ell$ for the $k$-th time. Note that $1 \leq |k| \leq \lfloor\frac{T}{\ell}\rfloor$.

\begin{claim}
    $\Pr[\mathcal{E}_k] \leq O(\frac{T^2}{n})$ if $\ell \geq 10 \log_d n$.
\end{claim}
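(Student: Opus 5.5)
We have to bound the probability that the lazy walk on $H$ returns to the same vertex $v_s=v_t$ while its net cycle-depth has changed by exactly $k\ell$. The approach is to split the walk's randomness into the cycle-step/expander-step choices and the expander moves, and to use the fact that reaching depth difference $k\ell$ forces many expander steps in between.

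\textbf{Step 1: many expander steps separate $s$ and $t$.} Fix a pair $s<t$. If $x_t-x_s=k\ell$ with $|k|\ge 1$, then at least $\ell$ cycle steps occur in $(s,t]$. Each step is a cycle step with probability only $\frac{2}{d+2}$, independently. Among the steps in $(s,t]$, let $N$ denote the number of expander steps.

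I will condition on the whole sequence of step types, namely cycle or expander, and on the cycle directions. This determines the depths $x_s,x_t$. The expander moves then form a simple random walk on $H$ of $N$ steps, independent of this conditioning.

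I split into two cases according to $N$.
\begin{itemize}
\item \textbf{Case $N\ge L':=\ell/(d+2)$.} Here I use the spectral bound from the Event~1 analysis: $\Pr[v_t=v_s\mid\cdot]\le \frac1n+\lambda^{N}$, with $\lambda=\Theta(1/\sqrt d)$ for a (near) Ramanujan $H$. Since $\ell\ge 10\log_d n$, we get $\lambda^{N}\le d^{-\Omega(\ell/d)}$. For the $n^{-1}$ target this needs $\ell$ large compared with $d\log_d n$. I expect to absorb this into the choice of $\ell$ (the final instance uses $n=2^{\ell\log d/10}$, i.e.\ $\ell=10\log_d n$, so the constants are delicate). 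If it falls short, I will settle for an $n^{-c}$ bound, which suffices for Lemma~\ref{lem:agreed}.
\item \textbf{Case $N<L'$.} Then at least $\ell$ of the at most $\ell+L'$ steps are cycle steps. By a Chernoff bound, for a window of length $w$ the number of cycle steps exceeds both $\ell$ and $\frac{d}{2(d+2)}w$ only with probability $e^{-\Omega(\ell)}=n^{-\Omega(1)}$ per pair.
\end{itemize}

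\textbf{Step 2: union bound.} Summing over the at most $T^2$ pairs $(s,t)$ gives $\Pr[\mathcal E_k]\le T^2(\frac1n+n^{-\Omega(1)})$. This matches $O(T^2/n)$ once the exponents from both cases are at least $1$, which holds provided the constant $10$ in $\ell\ge10\log_d n$ dominates the Chernoff and spectral constants.

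\textbf{Main obstacle.} The hard step is the first case. The spectral return bound $\frac1n+\lambda^{N}$ must hold for the simple walk conditioned on step types, and it must be made strong enough to reach order $1/n$ rather than $n^{-c}$.

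If the pure spectral bound is too weak, I will fall back on the local-tree structure from Lemma~\ref{lem:local tree}. Outside the small bad set $S$, a nonbacktracking-dominated walk of $N\ge L$ steps started in a tree-like neighborhood returns to its start with probability at most $(d-1)^{-\Omega(N)}$. The probability of hitting $S$ is handled exactly as for Event~1.
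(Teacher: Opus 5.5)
\textbf{There is a genuine gap in your Case~1, and it is structural rather than a matter of constants.}

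With threshold $L'=\ell/(d+2)$, your per-pair estimate is $\frac1n+\lambda_s^{\ell/(d+2)}$, where $\lambda_s=\frac{2\sqrt{d-1}}{d}+o(1)$ is the nontrivial spectral radius of the simple walk. At $\ell=10\log_d n$ this equals $\frac1n+n^{-\gamma_d}$ with $\gamma_d=\frac{10}{d+2}\cdot\frac{\ln\left(d/(2\sqrt{d-1})\right)}{\ln d}$. That exponent is about $0.2$ for $d=7$, and it decays roughly like $5/(d+2)$ as $d$ grows.

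So the union bound over the $T^2$ pairs gives only $T^2 n^{-\Theta(1/d)}$, for every $d\ge 7$. The constant $10$ never ``dominates'' here. Retreating to an $n^{-c}$ bound proves a strictly weaker statement than the claim. Your local-tree fallback does not close the gap either. That lemma certifies tree structure only within radius $c\log_{d-1}n$ for a tiny $c$ (the paper uses $L=0.01\log n/\log d$). The returns you must control take at least $\ell=10\log_d n$ steps, and once the walk leaves that ball the local tree structure says nothing about returning.

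\textbf{The fix: average over $N$ instead of thresholding it.} Your conditioning is fine: given the step types, $\Pr[v_t=v_s\mid\cdot]\le\frac1n+\lambda_s^{N}$ from any start vertex. The depth event is only needed to force $w:=t-s\ge|k|\ell$. After that you can drop it and take the expectation over $N\sim\mathrm{Bin}\bigl(w,\frac{d}{d+2}\bigr)$. This gives $\mathbb{E}[\lambda_s^{N}]=\bigl(\frac{2+d\lambda_s}{d+2}\bigr)^{w}=\lambda_P^{w}$, where $\lambda_P=\frac{2+2\sqrt{d-1}}{d+2}+o(1)$ is the nontrivial spectral radius of the lazy walk $P=\frac{2}{d+2}I+\frac{1}{d+2}A_H$.

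This is exactly the paper's argument. It bounds $\Pr[v_s=v_t=u,\ x_t-x_s=k\ell]\le\frac1n\,e_u^{\top}P^{t-s}e_u\cdot\mathbf{1}[t-s\ge|k|\ell]$, using stationarity, and then uses $e_u^{\top}P^{t-s}e_u\le\frac1n+\lambda_P^{t-s}$. Summing over $u$ and the pairs yields $\frac{T^2}{n}+O(T\lambda_P^{|k|\ell})$. Since $\lambda_P\approx 0.77$ at $d=7$ and decreases with $d$, one gets $\lambda_P^{10\log_d n}\le n^{-1.3}$ for all $d\ge 7$, hence $2T^2/n$ with no case split at all.

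A constant-fraction threshold such as $N\ge 0.6\,\ell$, with a Chernoff bound for the complement, can also be pushed to $1/n$. The margins are thin at $d=7$, though, and averaging is what makes the argument clean.
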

\begin{proof}
    \begin{align*}
        \Pr[\mathcal{E}_k] &\leq \sum_{\substack{s<t \\ s,t \in [T]}} \Pr[v_s=v_t, x_t-x_s=k\ell] \\
        &\leq \sum_{u\in V_H} \sum_{\substack{s<t \\ s,t \in [T]}} \Pr[v_s=v_t=u, x_t-x_s=k\ell] \\
        &= \sum_{u\in V_H} \sum_{\substack{s<t \\ s,t \in [T]}} \Pr[v_t=u, x_t-x_s=k\ell \mid v_s=u] \cdot \Pr[v_s = u] \\
        &= \frac{1}{n}\sum_{u\in V_H}\sum_{\substack{s<t \\ s,t \in [T]}} \Pr[v_t=u, x_t-x_s=k\ell \mid v_s=u] \\
        &\leq \frac{1}{n}\sum_{u\in V_H}\sum_{\substack{t-s \geq |k|\ell \\ s,t \in [T]}} \Pr[v_t=u \mid v_s=u] \\
        &= \frac{1}{n}\sum_{u\in V_H}\sum_{\substack{t-s \geq |k|\ell \\ s,t \in [T]}} e_u^TP^{t-s}e_u \\
    \end{align*}
    The fourth equality is due to the fact that we are in the non-adaptive random walk setting and thus the node we visit at the $s$ step is uniformly random.
    As discussed above, $e_u^TP^{t-s}e_u \leq \frac{1}{n}+\lambda^{t-s}$, hence
    \begin{align*}
        \Pr[\mathcal{E}_k] &\leq \frac{1}{n}\sum_{u\in V_H}\sum_{\substack{t-s \geq |k|\ell \\ s,t \in [T]}} e_u^TP^{t-s}e_u \\
        &\leq \frac{1}{n}\sum_{u\in V_H}\sum_{\substack{t-s \geq |k|\ell \\ s,t \in [T]}} (\frac{1}{n}+\lambda^{t-s})\\
        &\leq \sum_{\substack{t-s \geq |k|\ell \\ s,t \in [T]}} (\frac{1}{n}+\lambda^{t-s})\\
        &\leq \frac{T^2}{n}+T\cdot\sum_{\substack{T \geq t^\prime \geq |k|\ell}}\lambda^{t^\prime}\\
        &\leq \frac{T^2}{n} + O(T\lambda^{|k|\ell}) \\
        &\leq \frac{2T^2}{n}.
    \end{align*}
    The last equation follows from $\lambda \approx 1/\sqrt{d}$, $\ell \geq 10\log_d n$ and $k \geq 1$.
\end{proof}
A union bound over all $k$ and $m$ independent walks gives us the following
\[
\Pr[\text{Event 2}] \geq 1 - \frac{2mT^3}{n}.
\]

\paragraph{Bound on Event 3.}
Let $S$ be the set of vertices on $H$ visited by first $k$ projected walks $W_1, W_2, \cdots, W_k$. We consider the probability that the $(k+1)$-th projected walk $W_{k+1}$ visits $S$.
\begin{claim}
    $\Pr[\text{$W_{k+1}$ visits at least one vertex in $S$}] \leq \frac{T}{n}\cdot|S|$.
\end{claim}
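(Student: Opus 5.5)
The plan is to apply a union bound over the time steps of $W_{k+1}$, using the fact that its marginal position at every step is exactly uniform on $V_H$. The key structural point is that the projected walk $W_{k+1}$ is a lazy random walk on the $d$-regular graph $H$. Its starting vertex $p_H^1(v^1_{k+1,0})$ is uniform on $V_H$, because the starting vertex $v^1_{k+1,0}$ is uniform on $V(G_1)=V(H)\times V(C_1)$ and each $H$-fiber has exactly $2\ell$ vertices. Since $H$ is regular, the uniform distribution is stationary for the lazy walk, which moves with probability $\frac{d}{d+2}$ to a uniform expander neighbour and stays put otherwise. Hence for every $t\in\{0,\dots,T\}$ and every $u\in V_H$ we have $\Pr[w_t=u]=\frac1n$, where $w_t$ denotes the $t$-th vertex of $W_{k+1}$.

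The remaining issue is dependence: $S$ is itself random, determined by the first $k$ walks. In the non-adaptive model, however, the $(k+1)$-th walk begins at a fresh uniform vertex chosen independently of the earlier walks. Conditioning on the first $k$ walks therefore fixes $S$ without affecting the law of $W_{k+1}$, and the marginal computation above still holds conditionally. I would make this explicit for the coupling process as well. The marginal of $S^1$ is exactly $\mathcal D_{G_1}$, so the projected walks under the coupling have the same law as independent lazy walks on $H$, and it suffices to argue about that law.

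Given this, for a fixed set $S$ a union bound over the $T+1$ time steps and over $u\in S$ yields
\[
\Pr\bigl[\exists t:\ w_t\in S \,\big|\, W_1,\dots,W_k\bigr]\le \sum_{t=0}^{T}\sum_{u\in S}\Pr[w_t=u]=\frac{(T+1)|S|}{n}.
\]
To obtain the stated $\frac{T}{n}|S|$ exactly, I would either index the walk as $T$ positions, matching the transcript convention of $T$ node labels, or absorb the $+1$ into the constant. Averaging over the first $k$ walks gives the claim.

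The step needing the most care is the independence assertion. I need to confirm that in the coupling the $(k+1)$-th walk's start is drawn independently of the previous walks on each side, and that the coupling's internal choices, such as picking from $R^1$ versus other neighbours, preserve the uniform-neighbour marginal of each side. The latter is stated in the paper just after the coupling construction.

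Afterwards, the claim feeds into the Event 3 bound. Since $|S|\le kT$ deterministically, summing over $k<m$ gives $\Pr[\neg\text{Event 3}]\le \sum_k \frac{kT^2}{n}\le \frac{m^2T^2}{n}$ for each side.
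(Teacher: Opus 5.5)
Your proof is correct and matches the paper's argument. The paper uses the uniform marginals to bound the expected number of visits to $S$ and then applies Markov's inequality, which is exactly your union bound over time steps and vertices of $S$. Your extra remarks are sound refinements of points the paper leaves implicit: conditioning on the first $k$ walks, the coupling preserving the uniform-neighbour marginal, and the $T$ versus $T+1$ indexing convention.
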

\begin{proof}
    Since $W_{k+1}$ starts from a stationary distribution, we have $\Pr[v_i = u] = \frac{1}{n}$ for every $i$ and $u \in V_H$, where $v_i$ is the vertex visited at the $i$-th step of $W_{k+1}$. Hence, the expected number of visits to $S$ is $\frac{|S|}{n}|T|$. A straightforward argument using Markov's inequality finishes the proof.
\end{proof}

Since $|S| \leq mT$, we have 
\[
\Pr[\text{Event 3}] \geq 1 - m \cdot \frac{T}{n} \cdot mT = 1 - \frac{m^2T^2}{n}, 
\]
by a union bound over all $m$ projected walks.

\paragraph{Putting together.}
\begin{proof}[\textbf{of Lemma \ref{lem:agreed}}]
    By a union bound over Event $1$, $2$ and $3$, we see that the output of coupling process remains same with probability at least
    \[
    1 - \frac{mT^2}{n^c} - \frac{2mT^3}{n} - \frac{m^2T^2}{n} \geq 1 - \frac{m^2T^3}{n^c}.
    \]
    This finishes the proof.
\end{proof}

\section{Deferred proofs from Section \ref{sec:together}}

\subsection{Proof of Lemma \ref{lem:two-point-sde}}\label{append:proofoftwopoint}
Suppose, for contradiction, that an algorithm \(A\) succeeds with probability at
least \(3/4\) on both \(G_1\) and \(G_2\).  We construct a distinguisher \(D\)
between \(\mathcal{D}_{G_1}\) and \(\mathcal{D}_{G_2}\).  Given a transcript \(\mathcal S\), run \(A\) on
\(\mathcal S\), obtaining \(\widehat\rho\).  Output \(1\) if
\[
    W_1(\widehat\rho,\rho_{N_{G_1}})
    <
    W_1(\widehat\rho,\rho_{N_{G_2}}),
\]
and output \(2\) otherwise.

If \(\mathcal S\sim \mathcal{D}_{G_1}\) and
\(W_1(\widehat\rho,\rho_{N_{G_1}})\le \varepsilon\), then by the triangle
inequality and the assumption that 
\(W_1(\rho_{N_{G_1}},\rho_{N_{G_2}})>2\varepsilon\), we have
\[
    W_1(\widehat\rho,\rho_{N_{G_2}})
    >
    \varepsilon
    \ge
    W_1(\widehat\rho,\rho_{N_{G_1}}),
\]
so \(D\) outputs \(1\).  Thus \(D\) is correct with probability at least \(3/4\)
when the input comes from \(\mathcal{D}_{G_1}\).  The same argument shows that \(D\) is correct
with probability at least \(3/4\) when the input comes from \(\mathcal{D}_{G_2}\).  Hence \(D\)
distinguishes \(\mathcal{D}_{G_1}\) and \(\mathcal{D}_{G_2}\) with advantage at least \(1/4\), which implies
\(d_{\mathrm{TV}}(\mathcal{D}_{G_1},\mathcal{D}_{G_2})\ge 1/2\), a contradiction.

The quantitative version follows from the standard fact that the maximum
distinguishing success probability between two distributions \(\mathcal{D}_{G_1},\mathcal{D}_{G_2}\) with
equal prior is \(1/2+d_{\mathrm{TV}}(\mathcal{D}_{G_1},\mathcal{D}_{G_2})/2\). %
\hfill$\blacksquare$\par

\section{Numerical experiments for measuring $W_1(\rho_{A_{G_1}},\rho_{A_{G_2}})$}\label{sec:experiments}

Given parameters $n$, $d$, and $\ell$, the eigenvalues of $G_1$ and $G_2$ can be computed exactly from our construction. In this section, we study the $W_1$ distance between $\rho_{A_{G_1}}$ and $\rho_{A_{G_2}}$ by varying $\ell$. Specifically, we fix $n=10{,}000$ and $d=7$, and let $\ell$ range from $10$ to $300$. Figures~\ref{fig:illustration} (a) and (b) show the eigenvalues of $G_1$ and $G_2$ for $\ell=10$, while Figure~\ref{fig:illustration} (c) plots $W_1(\rho_{A_{G_1}},\rho_{A_{G_2}})$ against $1/\ell^2$ on a logarithmic scale as $\ell$ varies.

The results indicate that $W_1(\rho_{A_{G_1}},\rho_{A_{G_2}})$ is experimentally between $1/\ell^2$ and $1/\ell^3$, suggesting that the theoretical lower bound could ideally be improved. Consequently, the exponential lower bound for spectral density estimation could potentially be strengthened using our construction. However, due to current analytical limitations, we can only rigorously establish the weaker bound stated in Theorem~\ref{thm:w1_lower_bound}, namely 
\[
W_1(\rho_{A_{G_1}},\rho_{A_{G_2}}) = \Omega(1/\ell^6).
\]

\begin{figure}
    \centering
    \subfigure[Eigenvalues of $A_{G_1}$ when $\ell=10$]{
    \includegraphics[scale=0.27]{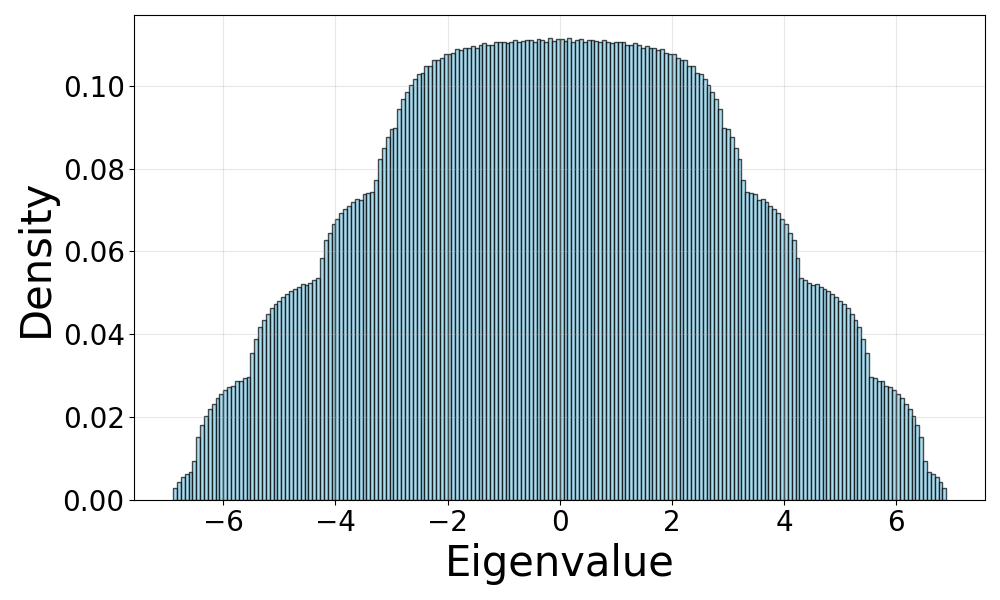} %
    }
    \hspace{0.2cm} \subfigure[Eigenvalues of $A_{G_2}$ when $\ell=10$]{
    \includegraphics[scale=0.27]{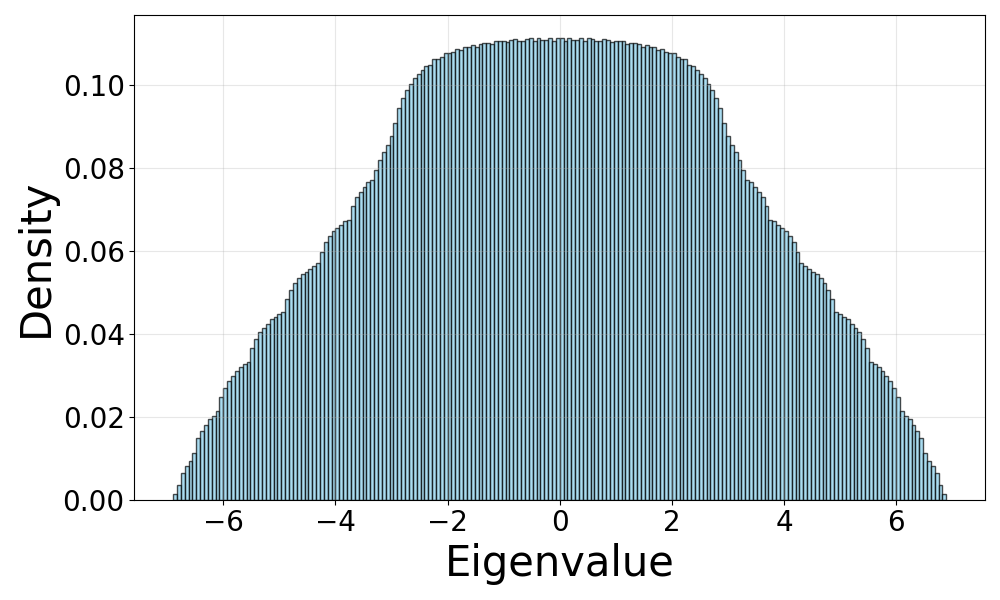} %
    }
    \hspace{0.2cm} \subfigure[$W_1(\rho_{A_{G_1}},\rho_{A_{G_2}})$ vs. $\poly \frac{1}{\ell}$ in $\log$ scales]{
    \includegraphics[scale=0.3]{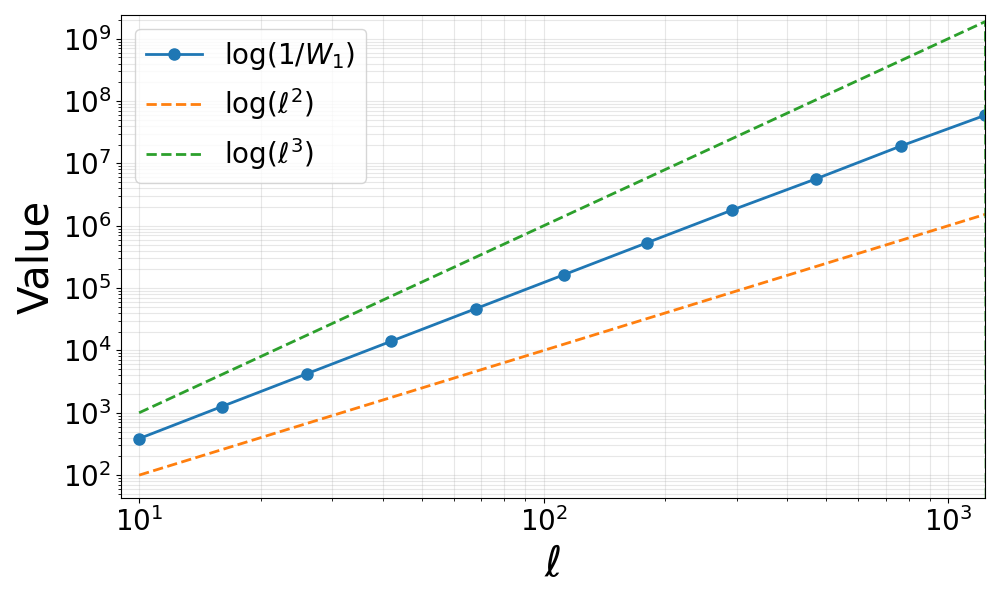} %
    }
    \caption{Testing $W_1(\rho_{A_{G_1}},\rho_{A_{G_2}})$}\vspace{-0.5cm}
    \label{fig:illustration}
\end{figure}

\section{Concavity of $\rho_d(x)$}

\begin{lemma}\label{thm:concave_rho_d}
Let $d \ge 2$ be an integer. The Kesten-McKay density function
\[
\rho_d(x) = \frac{d\sqrt{4(d-1)-x^2}}{2\pi(d^2-x^2)}, \quad |x| < 2\sqrt{d-1}
\]
is strictly concave on its support if and only if $d \ge 7$.
\end{lemma}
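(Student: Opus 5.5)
The plan is to reduce strict concavity of $\rho_d$ to the sign of one polynomial in $s=x^2$. The factor $\frac{d}{2\pi}$ is irrelevant, so set $a:=4(d-1)$ and $b:=d^2$ and study
\[
f(x)=\frac{\sqrt{a-x^2}}{b-x^2}, \qquad |x|<\sqrt a .
\]
Note that $b-a=(d-2)^2\ge 0$, and $b-x^2>0$ on the open support (strictly so for $d>2$; for $d=2$ the support is empty). Since $f>0$ there, I will work with logarithmic derivatives, using $f''/f=(f'/f)'+(f'/f)^2$. From
\[
\frac{f'}{f}=-\frac{x}{a-x^2}+\frac{2x}{b-x^2}
\]
one gets
\[
\Big(\frac{f'}{f}\Big)'=-\frac{a+x^2}{(a-x^2)^2}+\frac{2(b+x^2)}{(b-x^2)^2},
\qquad
\Big(\frac{f'}{f}\Big)^2=\frac{x^2(2a-b-x^2)^2}{(a-x^2)^2(b-x^2)^2}.
\]
Multiplying by the positive quantity $(a-x^2)^2(b-x^2)^2/f$ shows that $\operatorname{sign} f''(x)=\operatorname{sign} N(x^2)$, where
\[
N(s)=-(a+s)(b-s)^2+2(b+s)(a-s)^2+s(2a-b-s)^2, \qquad s\in[0,a).
\]
Strict concavity on the support is therefore equivalent to $N<0$ on $[0,a)$. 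One small caveat: $N<0$ gives $f''<0$ everywhere, hence strict concavity; conversely, a single point with $N>0$ breaks concavity.

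\textbf{Endpoint values.} The two endpoints of $[0,a]$ are easy to evaluate:
\[
N(0)=ab(2a-b)=ab\,(8d-8-d^2),
\qquad
N(a)=-2a(b-a)^2+a(a-b)^2=-a(b-a)^2\le 0 .
\]
Now $d^2-8d+8>0$ holds for integers $d\ge 7$ (at $d=7$ it equals $1$) and fails for $2<d\le 6$ (at $d=6$ it equals $-4$). This already gives the ``only if'' direction: for $3\le d\le 6$ we have $f''(0)>0$ (and $d=2$ is degenerate), so $\rho_d$ is not concave near $0$.

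\textbf{Interior for $d\ge 7$.} Here I have to rule out a positive bump of the cubic $N$ strictly inside $(0,a)$; this is the main step. My approach is to show $N$ is convex on $[0,a]$, so that its maximum there is attained at an endpoint. Expanding, the $s^3$ coefficient is $-1+2+1=2$, and the $s^2$ coefficient is
\[
(2b-a)+(2b-4a)+(2b-4a)=6b-9a .
\]
Hence, for $s\ge 0$,
\[
N''(s)=12s+12b-18a\;\ge\;12d^2-72(d-1)=12(d^2-6d+6),
\]
which is positive for $d\ge 5$, in particular for $d\ge 7$. So $N$ is convex on $[0,a]$, and therefore
\[
N(s)\le\max\{N(0),N(a)\} \qquad \text{for } s\in[0,a].
\]

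\textbf{Conclusion.} For $d\ge 7$ we have $N(0)<0$, and $N(a)=-a(d-2)^4<0$. Convexity then gives $N<0$ on $[0,a)$, so $f''<0$ on the whole open support, which is strict concavity. Together with the failure at $x=0$ for $d\le 6$, this proves the equivalence. The only genuinely delicate points are the bookkeeping in expanding $N$ and double-checking that $N''$ stays positive; every other step is a direct sign check.
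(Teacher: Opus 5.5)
Your proof is correct, but it takes a different route from the paper's. The paper writes $a^2$ for your $a$ and $d^2$ for your $b$. It computes $\varphi'$ directly in the factored form $x(B-x^2)/\bigl((d^2-x^2)^2\sqrt{a^2-x^2}\bigr)$ with $B=8(d-1)-d^2$, which is your $2a-b$. It then differentiates once more, using the logarithmic derivative of the denominator only. This gives $\varphi''$ as a positive denominator times the bracket $(B-3x^2)+x^2(B-x^2)\bigl(\frac{4}{d^2-x^2}+\frac{1}{a^2-x^2}\bigr)$. In that grouping every term is visibly nonpositive once $B<0$, so the case $d\ge 7$ is immediate and no interior analysis is needed.

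You instead clear all denominators and get the cubic $N(s)=2s^3+(6b-9a)s^2+6a(a-b)s+ab(2a-b)$. In fact $N(s)$ equals $(a-s)(b-s)$ times the paper's bracket, so the two criteria are the same function. Expanding, however, destroys the termwise sign structure. That is why you need the extra convexity step $N''(s)\ge 12(d^2-6d+6)>0$ together with the endpoint values $N(0)=ab(2a-b)$ and $N(a)=-a(b-a)^2$. I checked your coefficients and both endpoint values, and the argument goes through.

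The paper's version is shorter. Yours is a purely mechanical polynomial check, and as a by-product $N(a)=-a(d-2)^4<0$ shows that $\rho_d$ is concave near the edges of its support for every $d\ge 3$. So for $3\le d\le 6$ the failure of concavity occurs away from the edges.

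One small correction: for $d=2$ the support is not empty. It is $(-2,2)$, and $\rho_2(x)=1/\bigl(\pi\sqrt{4-x^2}\bigr)$ is the arcsine density. Your computation already handles this case. On the open support $x^2<a=b$, so $b-x^2>0$, and $N(0)=ab(2a-b)=64>0$. So $d=2$ belongs with $3\le d\le 6$ in the ``only if'' direction rather than being set aside as degenerate.
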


\begin{proof}
Let $a^2 = 4(d-1)$, so the support of $\rho_d(x)$ is $|x| < a$. Define
\[
g(x) = \sqrt{a^2 - x^2}, \quad h(x) = d^2 - x^2.
\]
Then
\[
\rho_d(x) = C \cdot \frac{g(x)}{h(x)}, \quad \text{where } C = \frac{d}{2\pi} > 0.
\]
Since $C$ is a positive constant, the concavity of $\rho_d$ is equivalent to the concavity of
\[
\varphi(x) = \frac{g(x)}{h(x)}.
\]

We compute:
\[
g'(x) = -\frac{x}{\sqrt{a^2-x^2}}, \quad h'(x) = -2x.
\]
By the quotient rule:
\begin{align*}
\varphi'(x) &= \frac{g'(x)h(x) - g(x)h'(x)}{h(x)^2}\\
&= \frac{-\frac{x}{\sqrt{a^2-x^2}}(d^2-x^2) - \sqrt{a^2-x^2}(-2x)}{(d^2-x^2)^2}\\
&= \frac{-x(d^2-x^2) + 2x(a^2-x^2)}{(d^2-x^2)^2\sqrt{a^2-x^2}}.
\end{align*}
Simplify the numerator:
\[
-x(d^2-x^2) + 2x(a^2-x^2) = x\left[-d^2 + x^2 + 2a^2 - 2x^2\right] = x\left[2a^2 - d^2 - x^2\right].
\]
Define constant
\[
B = 2a^2 - d^2 = 8(d-1) - d^2.
\]
Then
\[
\varphi'(x) = \frac{x\left[B - x^2\right]}{(d^2-x^2)^2\sqrt{a^2-x^2}}.
\]

Next step we compute the second derivative. Let
\[
N(x) = x(B - x^2), \quad D(x) = (d^2-x^2)^2\sqrt{a^2-x^2}.
\]
Then $\varphi'(x) = N(x)/D(x)$. By the quotient rule:
\[
\varphi''(x) = \frac{N'(x)D(x) - N(x)D'(x)}{D(x)^2}.
\]
Since $D(x) > 0$ on $|x| < a$, the sign of $\varphi''(x)$ is determined by the numerator:
\[
M(x) = N'(x)D(x) - N(x)D'(x).
\]
We compute:
\[
N'(x) = B - 3x^2.
\]
To compute $D'(x)$, take logarithms:
\[
\ln D(x) = 2\ln(d^2-x^2) + \frac{1}{2}\ln(a^2-x^2),
\]
\[
\frac{D'(x)}{D(x)} = 2\cdot\frac{-2x}{d^2-x^2} + \frac{1}{2}\cdot\frac{-2x}{a^2-x^2} = -\frac{4x}{d^2-x^2} - \frac{x}{a^2-x^2}.
\]
Thus
\[
D'(x) = -D(x)x\left[\frac{4}{d^2-x^2} + \frac{1}{a^2-x^2}\right].
\]
Substitute into $M(x)$:
\begin{align*}
M(x) &= (B-3x^2)D(x) - x(B-x^2)\left(-D(x)x\left[\frac{4}{d^2-x^2} + \frac{1}{a^2-x^2}\right]\right)\\
&= D(x)\left[B-3x^2 + x^2(B-x^2)\left(\frac{4}{d^2-x^2} + \frac{1}{a^2-x^2}\right)\right].
\end{align*}
Therefore,
\[
\varphi''(x) = \frac{M(x)}{D(x)^2} = \frac{B-3x^2 + x^2(B-x^2)\left(\frac{4}{d^2-x^2} + \frac{1}{a^2-x^2}\right)}{D(x)}.
\]
Since $D(x) > 0$, we have $\varphi''(x) < 0$ if and only if
\[
M(x) := B-3x^2 + x^2(B-x^2)\left(\frac{4}{d^2-x^2} + \frac{1}{a^2-x^2}\right) < 0.
\]

For $d \ge 7$, we show that $M(x) < 0$ for all $|x| < a$. First, compute $B$:
\[
B = -d^2 + 8d - 8.
\]
For $d = 7$, $B =  -1$. For $d > 7$, $B < -1$ since $B$ is decreasing for $d > 4$. Now consider the three terms in $M(x)$:
\begin{align*}
T_1 &= B - 3x^2, \\
T_2 &= \frac{4x^2(B-x^2)}{d^2-x^2}, \\
T_3 &= \frac{x^2(B-x^2)}{a^2-x^2}.
\end{align*}
Since $B \le -1 < 0$ and $x^2 \ge 0$, we have $T_2 \le 0$ and $T_3 \le 0$. Thus,
\[
M(x) = T_1 + T_2 + T_3 \le T_1 = B - 3x^2<0 \quad \text{for all } |x|< a.
\]

Therefore, for $d \ge 7$, $\varphi''(x) < 0$ for all $ |x|< a$. Hence $\rho_d(x)$ is strictly concave. Finally we show that when $d \le 6$, $\rho_d(x)$ is not concave. Actually for $d \le 6$, we have $B > 0$. In particular, at $x = 0$:
\[
M(0) = B > 0,
\]
so $\varphi''(0) > 0$, meaning $\rho_d$ is convex near $x = 0$. Therefore, $\rho_d$ cannot be concave on the entire interval.
\end{proof}

\end{document}